\newtheorem{example}{Example}[section]
\newtheorem{theorem}{Theorem}[section]
\newtheorem{lemma}[theorem]{Lemma}
\newtheorem{corollary}[theorem]{Corollary}
\newcommand{\RR}{\mathbb{R}}      
\newcommand{\A}{\mathcal{A}}
\newcommand{\XX}{\mathcal{X}}
\newcommand{\PP}{\mathcal{P}}
\DeclareMathOperator*{\argmax}{arg\,max}
\newcommand\rcb[1]{}
\newcommand\jm[1]{}
\newcommand\os[1]{}
\def\sepobjective{0.8em}
\newenvironment{program}{\begin{displaymath}\begin{array}{cr@{\hspace{3pt}}l@{\hspace{2em}}{l}}}{\end{array}\end{displaymath}}
\def\maximize{\vspace{\sepobjective} \mathrm{maximize}}
\def\minimize{\vspace{\sepobjective} \mathrm{minimize}}
\def\subjectto{\mathrm{subject\ to}}
\begin{document}


\title{The Ad Types Problem}

\author{Riccardo Colini-Baldeschi\thanks{Facebook, Core Data Science, \texttt{rickuz@fb.com}} \and Juli\'{a}n Mestre\thanks{University of Sydney, School of Computer Science, \texttt{julian.mestre@sydney.edu.au}} \and Okke Schrijvers\thanks{Facebook, Core Data Science, \texttt{okke@fb.com}} \and Christopher A. Wilkens\thanks{Tremor Technologies, \texttt{c.a.wilkens@gmail.com}. The work was done while the author was a full-time employee at Facebook Core Data Science.}}

\maketitle

\begin{abstract}
The Ad Types Problem (without gap rules) is a special case of the assignment problem in which there are $k$ types of nodes on one side (the ads), and an ordered set of nodes on the other side (the slots). The edge weight of an ad $i$ of type $\theta$ to slot $j$ is $v_i\cdot \alpha^{\theta}_j$ where $v_i$ is an advertiser-specific value and each ad type $\theta$ has a discount curve $\alpha^{(\theta)}_{1} \ge \alpha^{(\theta)}_{2} \ge ... \ge 0$ over the slots that is common for ads of type $\theta$. We present two contributions for this problem: 1) we give an algorithm that finds the maximum weight matching that runs in $O(n^2(k + \log n))$ time for $n$ slots and $n$ ads of each type---cf. $O(kn^3)$ when using the Hungarian algorithm---, and 2) we show to do VCG pricing in asymptotically the same time, namely $O(n^2(k + \log n))$, and apply reserve prices in $O(n^3(k + \log n))$.

The Ad Types Problem (with gap rules) includes a matrix $G$ such that after we show an ad of type $\theta_i$, the next $G_{ij}$ slots cannot show an ad of type $\theta_j$. We show that the problem is hard to approximate within $k^{1- \epsilon}$ for any $\epsilon > 0$ (even without discount curves) by reduction from Maximum Independent Set. On the positive side, we show a Dynamic Program formulation that solves the problem (including discount curves) optimally and runs in $O(k\cdot n^{2k + 1})$ time.
\end{abstract}


\section{Introduction}
\label{sec:intro}

In online advertising, there is a hugely influential literature on sponsored search \cite{V07, EOS07, LPSV07}. In this setting, a small number of ads (typically 2 or 3) are shown next to organic search results using a \emph{position auction} \cite{V07, EOS07}. Since users tend to look at the results from top to bottom, an ad will have a higher click-through-rate (CTR) in a higher slot, and are therefore higher slots rae considered more valuable. The common way to capture this slot effect is with a separable model, where ad slots have an associated discount $1\geq \alpha_1 \geq .. \geq 0$ that represents the advertiser-agnostic CTR of the slot. The auction is run taking the expected number of clicks (as represented by $\alpha_j$) as the ``quantity of good'' that is sold. Work in this area has proliferated and the position auction model lies at the basis of many applications of online advertising, even outside of sponsored search. 

While the position auction model has been very influential in online advertising, there are  important differences between the multi-slot ad allocation problem on a search results page and the multi-slot problem in content feeds like Facebook's News Feed, reddit, and Apple News. In content feeds, ads are interspersed with organic content, so not all ads are visible from the start, and when the first ad is shown it is not clear how many ads will have to be allocated.\footnote{While this may not be a problem when deciding which ad to show the user, pricing generally needs information about how many items are sold, and what the other competition is in the auction.} Moreover, there are many different types of organic content like text, photos, and videos mixed in the same stream. Similarly, there are many different types of ads that have different associated objectives: some ads simply want the user to see them, others want the user to click or view a video. These objectives must be considered when allocating slots to ads, since the probability that a user will see an impression-ad is quite different from the probability that she'll watch a video ad or click on a link-click ad. There may be diversity rules as well that after a video ad you cannot show another video ad, but it's okay to show a banner ad. These considerations have implications on which ads are eligible to be shown in which slots.

The standard position auction model fails to capture these important differences. For example, the probability of a user watching a video ad typically decays differently than a link-click ad. If one were to use the same discount curve for ads of different types, this could lead to a suboptimal allocation:

\begin{example}
Suppose we have a setting with 2 ad types, link-click ads and video ads, two ad slots, and we have discount curve $\alpha_1 = \tfrac12, \alpha_2=\tfrac14$. These discounts are accurate for link-click ads (i.e. $\alpha^{(\text{link})}_1 = \tfrac12, \alpha^{(\text{link})}_2=\tfrac14$), but for video ads, the user is more likely to watch the video in the second slot than they are to click a link in that slot: $\alpha^{(\text{video})}_1 = \frac12$, $\alpha^{(\text{video})}_2 = \frac13$. 

Consider a video ad with bid $\$12$ and a link-click ad with bid $\$10$. The optimal allocation assuming that discount curve $\alpha$ is accurate for both ads would assign the video ad to slot 1 and the link-click ad to slot 2 for total value $\tfrac{1}{2}\cdot\$12+\tfrac{1}{4}\cdot\$10 = \$8.50$. However, switching the ads yields total value $\tfrac{1}{2}\cdot\$10+\tfrac{1}{3}\cdot\$12 = \$9 > \$8.50$.\footnote{Those familiar with auction theory will note that this also implies that VCG prices w.r.t. $\alpha$ would not be incentive compatible. For readers less familiar with auction theory, we provide a primer in Section~\ref{ss:auction-theory}.}

\end{example}

In this paper we propose a new theoretical model for online advertising that addresses these issues. It captures the position auction as a special case, but can handle discount curves for multiple types and intersperse advertising with organic content in a dynamic manner.\footnote{While our motivation for studying this problem comes from online advertising in content streams, it captures many other interesting settings that are unrelated to online advertising. For example, the setting without gap rules can model a worker with different time slots and jobs of different types that need to be done; jobs are most valuable when completed early and delays for jobs of the same type are discounted similarly. Adding gap rules can model the cost of moving between locations (in the physical world) or context-switching (in the digital world).
} An \emph{Ad Types Problem} instance has $k$ ad types, that each have their own discount curve over $n$ slots, i.e. for each ad type $\theta$ we have a discount curve $\alpha^{(\theta)}_1 \geq \alpha^{(\theta)}_2 \geq \ldots \ge \alpha^{(\theta)}_n \geq 0$ that represents the slot-specific action-rate for ads of type $\theta$. All ad types agree on the order of the slots. Gap rules are modeled by a $k\times k$ matrix $G$, which indicates for each pair of ad types $(\theta_i, \theta_j)$, that after showing an ad of type $\theta_i$, the next $G_{ij}$ stories cannot be of type $\theta_j$.

We first focus on the special case where $G = \bf 0$, i.e. different ad types have different discount curves but there are no constraints on the gaps between ads. In this setting, the Ad Types Problem is a special case of the maximum-weight bipartite matching problem (also known as the assignment problem), so we could find an optimal allocation using the Hungarian algorithm in $O(kn^3)$ time \cite{RT12} (where $k$ is the number of ads, $n$ the number of slots, and we have $n$ ads per type). Our first result is an algorithm that finds the optimal allocation in $O(n^2(k + \log n))$ time, saving a linear factor. In auctions, it's not only important to find the optimal allocation of ads to slots, one also needs to compute appropriate prices. We show that we can compute VCG prices in asymptotically the same time as finding the optimal allocation, i.e. $O(n^2(k + \log n))$. A common way to control revenue is to set an appropriate minimum bid (also knows as a \emph{reserve price} \cite{M81}); we show that we can compute incentive-compatible prices with advertiser-specific reserve prices for all ads in $O(n^3(k + \log n))$ time.

Next we consider the more general Ad Types problem with both discount curves and gap rules (where $G \neq \bf 0$). We show that the problem is hard to approximate within $k^{1- \epsilon}$ for any $\epsilon > 0$ (even without discount curves) by reduction from Maximum Independent Set. On the positive side, we show a Dynamic Program formulation that solves the problem (including discount curves) optimally and runs in $O(k\cdot n^{2k + 1})$ time.

\subsection{Related Work}
\paragraph{Assignment Problem.}

The maximum-weight bipartite matching problem, also known as the assignment problem, is a classical problem in operations research. Let $(A, B, E)$ be a complete bipartite graph with edges weights $v:E \rightarrow \RR^+$, and $V=A \cup B$ the set of nodes; the goal is to find a matching $M$ of maximal total weight $\sum_{e\in M} v(e)$. Kuhn \cite{K55} proposed an algorithm for this problem---which he called the Hungarian algorithm---based on ideas by K\H{o}nig and Egerv\'{a}ry, though he only proved that the algorithm would terminate, not what the time complexity is. Munkres \cite{M57} showed that the time complexity of the Hungarian algorithm is $O(|V|^4)$. Edmonds and Karp \cite{EK72} gave a $O(|V|^3)$ time algorithm for balanced graphs, and Ramshaw and Tarjan \cite{RT12} more recently gave an algorithm for unbalanced graphs (wlog assume $|A|<|B|$) that runs in $O(|E||A| + |A|^2\log |A|)$.
Since the seminal work on the assignment problem, there has been active research into relevant special cases. In particular there is a line of work on \emph{convex bipartite graphs}, where the right side of the graph is ordered, and nodes on the left can only be connected to a single contiguous block of nodes on the right. For the unweighted case, a line of work starting with Glover \cite{G67, LP81, GT85} shows that the problem can be solved in time linear in the number of nodes $O(|V|)$. General weights are not considered, though early work on \emph{vertex-weighted} bipartite graphs (where each node $i$ has an associated weight $w_i$ and the weight of an edge from $i$ to $j$ is $w_{ij} = w_i + w_j$) yield a $O(|E| + |B|\log|A|)$ time algorithm \cite{K08}. More recently, Plaxton \cite{P08,P13} showed that Two-Directional Orthogonal Ray Graphs (a generalization of convex graphs) admit a $O(|V| \log |V|)$ time algorithm.

Sharathkumar and Agarwal \cite{SA12} consider a more general set of edge weights, where nodes are embedding in $d$-dimensional space, and the weights of the complete bipartite graph are all either the $L_1$ or $L_\infty$ metric. They present an algorithm to solve the maximum weight bipartite in $O(|V|^{3/2}\log^{d+O(1)}(|V|)\log \Delta)$, where $\Delta$ is the diameter of the space that contains the points.

None of the results on specializations cover The Ad Types Problem setting (even without gap rules).

\paragraph{Ad Auctions.}
The simple separable model for position auctions appears in Varian \cite{V07} and Edelman et al. \cite{EOS07}. One body of related work relaxes the assumption that action rates are separable. One common theme is to model externalities between ads (also related to our gap rules) \cite{KM08, GM08, GK08, AE11, AFMP08, GIM09, GS10}. Of note, \cite{KM08, GM08, FG16, AFMP08} study algorithms for computing allocations in models where the user's attention cascades and prove hardness results. A different generalization is to allow arbitrary action rates that are still independent between ads \cite{AGV07, CW14, CSW18}, which corresponds to the Ad Type Problem (without gap rules) where each ad has a unique type.

Another generalization of the basic position auction allows ads to be placed in complex ways. A few papers study mechanisms that permit presentation constraints and/or ads with variable presentation \cite{H16, CKSW17, HIKLN18, MN09, DSYZ10}. Mahdian et al. study auctions for ads displayed on maps along with organic results \cite{MSV15} (since places of interest are connected to a physical location, this imposes constraints on where ads can be placed).

Finally, the connections between ad auctions and max-weight matching (and the Hungarian algorithm) have been studied before as well \cite{DHW13, CSW18, KMU16, EK10}.

\subsection{Contributions}

This paper presents three main contributions:
\begin{itemize}
    \item {\bf Optimal Allocation.} Firstly, we give an algorithm to optimally solve the Ad Types problem \emph{without} gap rules. This setting is a special case of the assignment problem with applications beyond ad auctions. Our algorithm is a specialization of the Hungarian algorithm to find the maximum-weight matching in the bipartite graph that uses the structure of the Ad Types Problem to run in $O(n^2(k + \log n))$ time (compared to $O(kn^3)$ for running the Hungarian algorithm on the instance; Theorem~\ref{thm:runtime}). While we are motivated by ad auctions, this setting also models other common settings like a worker who needs to do several tasks of different types.
    \item {\bf Pricing.} Secondly, we show that we can do incentive-compatible pricing in this setting with minimal overhead. First we show how to compute VCG prices efficiently. Naively, VCG prices are computed by resolving the original problem $n$ times (removing 1 buyer in each instance). We show that we can extract VCG prices from the dual variables in our algorithm when they are properly updated. This yields VCG prices in $O(n^2(k + \log n))$ time (cf. $O(n^3(k + \log n)$ when implemented naively; Theorem~\ref{thm:vcg}). Second, we show that we can apply reserve prices (and in fact in all single-parameter environments) without a changepoint algorithm \cite{M81, R16}. For our case, this yields a $O(n^3(k + \log n))$ time algorithm (Corollary~\ref{cor:myerson}). 
    \item {\bf Gap Rules.} Finally, we consider the more general Ad Types problem with both discount curves and gap rules (where $G \neq \bf 0$).  We show that the problem is hard to approximate within $k^{1- \epsilon}$ for any $\epsilon > 0$ (even without discount curves) by reduction from Maximum Independent Set (Theorem~\ref{thm:hard}). On the positive side, we give a Dynamic Program formulation that solves the problem (including discount curves) optimally and runs in $O(k\cdot n^{2k+1})$ time (Theorem~\ref{thm:dp}).
\end{itemize}

\section{Preliminaries}
\label{preliminaries}

In this section we give a primer on auction theory, a formal definition of the ad types problem, and a review of the Hungarian Algorithm and its application to Ad Types problem.

\subsection{Auction Theory}\label{ss:auction-theory}
There are $n$ (strategic) agents, and some finite set of outcomes $\Omega$. Agents have some value function over outcomes: $v_i : \Omega \rightarrow \RR$. An auction $\A = (\XX, \PP)$ consists of an allocation function $\XX : ( \Omega \rightarrow \RR)^n \rightarrow \Omega$ that takes the bids (meaning a \emph{report} of the valuation) of $n$ agents and selects the outcome, and a payment function $\PP : ( \Omega \rightarrow \RR)^n \rightarrow \RR$.

A common goal\footnote{Other common goals include revenue and a notion of high-value allocation with fairness concerns.} for the auctioneer is to select the outcome $\omega^*$ with the maximal total value, which is known as the \emph{social welfare}: $\omega^* = \argmax_{\omega \in \Omega} \sum_{i=1}^n v_i(\omega)$. Note however, that the auctioneer doesn't have direct access to the valuation function $v_i$; she elicits them as bids $b_i$. Each agent cares about their own utility, $u_i(b_i, b_{-i}) = v_i(\XX(b_i, b_{-i})) - \PP(b_i, b_{-i})$, where $b_{-i}$ are the bids of all agents except $i$. Thus, given bids by the other agents $b_i$ she will report $b^*_i = \argmax_{b_i} u_i(b_i, b_{-i})$. The auctioneer can use the pricing function to incentivize agents to report their true valuations. An auction is called \emph{dominant strategy incentive compatible (DSIC)}, or simply incentive compatible, if for any bid profile $b_{-i}$ of agents other than $i$, agent $i$ always maximizes their utility by bidding truthfully: $v_i \in \argmax_{b_i} u_i(b_i, b_{i-1}))$. Celebrated work by Vickrey, Clarke and Groves \cite{V61, C71, G73} shows that it is always possible to do this by charging agents their externality.

A particularly interesting class of valuation functions are \emph{single-parameter} in the following sense: there is some function $x_i : \Omega \rightarrow \RR^+$ that maps outcomes to a ``quantity of goods'' that agent $i$ receives. The valuation function is then expressed as $v_i(\omega) = v_i\cdot x_i(\omega)$, where we overload $v_i$ to be a single ``value-per-unit-good'' parameter. Single-parameter settings are quite common, e.g. single-item auction are single parameter: an agent has a value $v_i$ for receiving the good, the outcome space consists of the item going to each of the agents that participate, and $x_i(\omega) = 1$ iff agent $i$ receives the good in outcome $\omega$ and it is 0 otherwise. For single-parameter environments the allocation function $\XX$ and pricing function $\PP$ take as input a single number for each advertiser: their per-unit bid $b_i\in\RR$.

Besides social welfare, a common goal for an auctioneer is to maximize revenue. Myerson \cite{M81} shows how to maximize expected revenue in single-parameter environments when a buyer has knowledge about the distributions $F_1, F_2, ..., F_n$ from which each buyer draws their value. For a buyer $i$ whose distribution $F_i$ is \emph{regular}---meaning $\phi_i(v) := v - \frac{1-F_i(v)}{f_i(v)}$ with CDF $F_i$ and corresponding PDF $f_i$ is monotonically non-decreasing---the optimal expected revenue from that buyer is obtained by setting reserve price $r_i$ (also known as a minimum bid) equal to $\phi^{-1}(0)$.\footnote{Myerson also extends the results to non-regular distributions, though that is outside the scope of the present work. Setting the optimal reserve price for a non-regular distribution is still guaranteed to be within a factor 2 of optimal, see e.g. \cite{RS16}.} To apply a reserve price $r_i$, let $x_i(b_i;b_{-i})$ be the quantity of goods that a single-parameter buyer receives. If $b_i < r_i$ the buyer is rejected from the auction and thus receives nothing; otherwise $x_i$ is identical to the setting without reserve price $r_i$. The incentive-compatible price corresponding to $x_i$ is given by Myerson's payment identity: $p_i = \int_{b = 0}^{b_i} b\cdot x'_i(b;b_{-i}) \text{ d}b$. For discrete $x_i$, this becomes a summation over the changepoints where the allocation changes.

\subsection{The Ad Types Problem}

\emph{The Ad Types Problem} involves computing an allocation of a set of $N$ ads to $n \leq N$ slots. Ads come in one of $k$ different ad types $\theta_l$, for $l\in\{1, ..., k\}$. We let the ads of type $\theta_l$ be $a^{(\theta_l)}_i$ for $i \in {1, \ldots, n_l}$. There are three main components to the definition of the problem:

\begin{itemize}
	\item {\bf Valuations.} Ad $i$ of type $\theta$ has a value-per-conversion (a.k.a. value-per-action) $v^{(\theta)}_i$. Ads of different types have different conversion events, e.g. for a display ad the conversion event is a view, for a link ad the conversion event is a link click, and for a video ad the conversion event is the user watching video ad. For each ad type $\theta$, we index the ads in non-increasing order of valuation, i.e. $v^{(\theta)}_1 \ge v^{(\theta)}_2 \ge \ldots \ge v^{(\theta)}_{n_l} \ge 0$.

	\item {\bf Discount curves.} We assume a separable model for discount curves where we can write $\Pr[\text{conversion on ad }i\text{ (of type }\theta\text{) in slot }j] = \alpha^\theta_j \cdot \beta_i$ where $\alpha^\theta_j$ is the slot effect for a particular ad type $\theta$ (e.g., the probability that a user will watch a video ad if its shown in the $j$th slot) and $\beta_i$ is the advertiser quality (this separable model is also standard in position auctions \cite{V07,EOS07}). In the remainder of the paper we assume wlog that the advertiser effect has already been included in the advertiser's value, i.e., if the value-per-conversion of the advertiser is $v_i'$, then $v_i = \beta_i\cdot v'_i$. We further abuse notation to let $v_{ij} = \alpha^\theta_j\cdot v_i$ for ad $i$ of type $\theta$ in slot $j$.

	Discounts are monotonically non-increasing, and all ad types agree on the order of slots, i.e. for each ad type $\theta$, we have $\alpha^{(\theta)}_1 \geq \alpha^{(\theta)}_2 \geq \ldots \ge \alpha^{(\theta)}_n \geq 0$.

	\item {\bf Gap rules.} When ads are interspersed with organic content, there must be some way to control how many ads are shown. In the simplest case, where there's only one type of ad, this can be implemented by a gap rule $g$, which states that two ads must be at least $g$ slots apart from each other. When there are multiple ad types, there is a $k\times k$ matrix $G$, which indicates for each pair of ad types $(\theta_i, \theta_j)$, that after showing an ad of type $\theta_i$, the next $G_{ij}$ stories cannot be of type $\theta_j$.
\end{itemize}

The Ad Types Problem is to find a social welfare maximizing allocation that obeys the gap rules.

\subsection{The Hungarian Algorithm}
\os{In the related work I use $A$ and $B$, let's make this consistent.}
\os{Do pass over this section later.}

\jm{Sorry, I changed this because $A$ and $B$ meant something different in the next section ($A$ is the augmenting path, and $B$ is the alternating tree}
The Hungarian Algorithm \cite{K55, M57} is a classical algorithm for computing a maximum weight matching in a bipartite graph. Starting from a trivial primal solution (empty matching) and a trivial dual solution, the algorithm iteratively increases the cardinality of the matching while improving the value of the dual solution until the value of the primal solution equals that of the dual.

Let $(U, V, E)$ be a complete bipartite graph with edges weights $v:E \rightarrow \RR^+$. The primal/dual pair of linear programs capturing the problem are as follows.

\begin{program}
	\maximize & \sum_{(i, j) \in E} & v_{i j} x_{i j} \\
	\subjectto
	& \sum_{j} x_{ij} & \leq 1 &  \forall i \in U \\
	& \sum_{i} x_{ij} & \leq 1 & \forall j \in V \\
	& x_{ij} & \geq 0 & \forall (i, j) \in E
\end{program}

\begin{program}
	\minimize & \sum_{i \in U} u_i  & + \sum_{j \in V} p_j \\
	\subjectto
	& u_i + p_j & \geq v_{ij} & \forall (i,j) \in E \\
	& u_{i} & \geq 0 & \forall i \in U \\
	& p_{j} & \geq 0 & \forall j \in V
\end{program}

The algorithm starts from an empty primal solution $M = \emptyset$, and a trivial feasible dual solution $u_i = 0$ for all $i \in U$ and $p_j = \max_{(i,j) \in E} v_{ij}$ for all $j \in V$. In each iteration, the algorithm identifies the set of tight edges $T = \{ (i, j) \in E : u_i + p_j = v_{ij}\}$ and builds an alternating BFS tree $B$ (also known as Hungarian tree) in $(U, V, T)$ out of the free vertices in $V$. If the alternating tree contains an augmenting path $A$, we augment $M$ with $A$ thus increasing its cardinality; if no such path is available, we can update the dual solution by reducing the dual value of $V \cap B$ and increasing the dual value of $U \cap B$ by the same amount until a new edge become tight. This update maintains feasibility while reducing the value of the dual solution and makes at least one new edge tight, which in turn allows us to grow the alternating tree further.

Throughout the execution of the algorithm we maintain the invariants that the dual solution is feasible and that the edges in the matching $M$ are tight. As a result, at the end of the algorithm we have a matching whose weight equals the value of the dual feasible solution, which acts as a certificate of its optimality.

Using the right data structures, it is possible to implement the algorithm so that the amount of work done between each update to $M$ is $O(|E| + |U| \log |U|)$. Therefore, if we let $M^*$ be a maximum weight matching, then the Hungarian algorithm can be implemented to run in $O(|M^*| (|E| + |U| \log |U|) )$ time~\cite{FT87}.


\section{Ad Types Problem without Gap Rules}
\label{discount-only}

In this section we consider the ad types problem with discount curves but no gap rules. In this model we have $k$ ad types, and each ad type has its own monotonically decreasing discount curve $\alpha_j^{(\theta_l)}$ for $l \in 1, 2, \ldots, k$. Without gap rules, the problem becomes a simple maximum weight bipartite assignment on a complete graph with $N$ vertices (ads) on one side of the bipartition and $n$ vertices (slots) on the other side of the bipartition, with $n < N$. Therefore, the Hungarian algorithm can solve this problem in $O(N n^2)$ time. We will assume throughout there are exactly $n$ ads of each type\footnote{If an ad type has fewer than $n$ ads, we can append ads with value $0$, if there are more than $n$ ads of a type, with loss of generality we can restrict attention to the $n$ highest-value ads.}, hence the Hungarian algorithm runs in $O(kn^3)$ time.

In this section we start by giving an algorithm that finds the maximum-weight bipartite matching in $O(n^2(k + \log n))$ time (Section~\ref{ss:opt-problem}). We show that in some sense the dependency on $k$ can be expected: when $k = n$, the problem is no easier than solving the assignment problem (i.e. a common order of the slots does not improve the running time; see Section~\ref{ss:kn}). We then turn our attention to pricing and show that computing VCG prices can be done in $O(n^2(k + \log n))$ (Section~\ref{ss:vcg} and that we can apply reserve prices in $O(n^3(k + \log n))$ (Section~\ref{ss:reserve-prices}).

\subsection{Finding the Optimal Allocation}
\label{ss:opt-problem}

We present an adaptation of the Hungarian algorithm \cite{K55, M57} that exploits the special structure of the Ad Types problem. In the following we use the language of markets to describe the Hungarian algorithm: the dual variable of a slot $j$ corresponds to a price $p_j$, while a dual variable of an advertiser $i$ corresponds to the utility $u_i$ of the advertiser if they get an item out their demand set (given the prices) \cite{EK10}. Moreover, the instance is a complete bipartite graph with ads on one side and slots on the other side where the weight of the edge $(i,j)$ is $v_{ij}$. The maximum-weight matching in the bipartite graph corresponds to the social-welfare maximizing allocation of ads to slots.
For ease of exposition, we assume that values and discounts are monotonically \emph{strictly} decreasing, this restriction can be lifted by consistent tie-breaking.

\begin{algorithm}
	\begin{algorithmic}[1]
		\REQUIRE Values $v^{(\theta)}_1 > v^{(\theta)}_2 > ... > 0$, and discounts $\alpha^{(\theta)}_1 > \alpha^{(\theta)}_2 > ... > 0$ for each ad type $\theta$.
		\ENSURE Matching $M$ that maximizes $\sum_{(i, j)\in M} v_{ij}$.
		\STATE Initialize the dual solution so that \\
		$\quad \circ\  u_{i} \leftarrow 0$ for all ads $i$, \\
		$\quad \circ\ p_{j} \leftarrow \max v_{i',j'}$ for all slots $j$.

		\STATE Let $M \leftarrow \emptyset$ be the matching.
		\FOR{slot $j$ in descending order} \label{ln:for-slots}
		\STATE Let $B \gets \{j\} $ be an alternating BFS tree
		\STATE Let $P$ be an empty priority queue
		\STATE $P \leftarrow \textsc{UpdatePossibleNewEdges}(P, v, \alpha, M, j)$ \label{ln:init-P}
		\WHILE{$B$ does not contain an alternating path}    \label{ln:while-loop}
		\STATE $(i', j') \gets $ remove from $P$ next tight edge

		\STATE $\Delta \gets v_{i'j'} - u_{i'} - p_{j'}$ \label{ln:find-min} \COMMENT{note that $\Delta$ could be 0}
		\STATE Implicitly update the dual solution so that \\
		$\quad \circ\  u_{i''} \leftarrow u_{i''} + \Delta$ for all ads $i''\ \in B$, \\
		$\quad \circ\ p_{j''} \leftarrow p_{j''} - \Delta$ for all slots $j''\in B$.
		\IF{$i'$ is matched in $M$}
			\STATE $B \gets B \cup \{ (i', j'), (i', M(i')) \}$ \label{ln:expand-B}
			\STATE \textsc{UpdatePossibleNewEdges}$(P, v, \alpha, M(i'))$ \label{ln:update-P}
		\ELSE
			\STATE  $B \gets B \cup \{ (i', j') \}$  \COMMENT{now we have an augmenting path}

		\ENDIF
		\ENDWHILE
		\STATE  $A \gets $ alternating path in $B$
		\STATE $M \leftarrow \textsc{AddAlternatingPath}(M, A)$.
		\STATE explicitly update the dual solution $(u, p)$
		\ENDFOR
	\end{algorithmic}
	\caption{Hungarian algorithm for the Ad Types problem. \label{alg:ha}}

\end{algorithm}

Algorithm~\ref{alg:ha} shows how to compute the optimal allocation in an Ad Types instance. The algorithm is identical to how the Hungarian algorithm is commonly implemented; the only difference is how we maintain the set of possible new edges in Lines \ref{ln:init-P} and \ref{ln:update-P}. The algorithm initializes the dual solution $(u, p)$ to be feasible, and starts with an empty matching $M$. Algorithm considers slots in descending order in each iteration of the for loop in Line \ref{ln:for-slots}; we call each such iteration a \emph{phase}.

During each phase we iteratively update the dual variables until we find an augmenting path to increase the size of the matching $M$ by one. In each of these iterations within a phase we explore a tight edge leading to a matched edge and both edges are added to our alternating tree. Every time we add a new matched slot $j'$ to the alternating tree we explore the edges incident on $j'$ using the routine {\sc UpdatePossibleNewEdges}, which scan the edges incident on $j'$ and works out which edges are tight and when the remaining edges will become tight. All these new edges are stored in a priority queue for later retrieval.

\paragraph{High-level running time analysis.} Even though the algorithm is not well defined yet (we have not specified how to implement {\sc UpdatePossibleNewEdges}), still we can say something about the running time of the algorithm.

Each phase is implemented using a priority queue $P$ over some of the ads not in $B$. For each ad $i'$ in $P$ we keep track of the next edge $(i', j')$ that would become tight given the current structure of $B$. The priority of $i'$ captures \emph{when} this next edge becomes tight, the smaller the priority the sooner it becomes tight; similarly, if $i'$ already has a tight edge incident on itself then it should have the smallest priority in the queue.

In the normal implementation of the Hungarian Algorithm, the procedure \textsc{UpdatePossibleNewEdges}$(P, v, \alpha, j')$ iterates over all edges $(i', j')$ incident on $j'$. If $i' \in B$ we can ignore the edge as $i'$ has already been discovered and its slack $v_{i',j'} - \alpha_{j'} - u_{i'}$ will not change with future updates (since now both $i'$ and $j'$ belong to $B$). If $i' \notin B$ then we compute its current slack $v_{i',j'} - \alpha_{j'} - u_{i'}$ to work out when it will become tight and compare this against the time of the current next tight edge incident on $i'$, which we may need to update.

Without making any assumptions on the structure of the valuations, in the worst case in each iteration of the while loop in Line~\ref{ln:while-loop} we perform $O(nk + \log nk) = O(nk)$ work (assuming a Fibonacci heap implementation for $P$) since there are $kn$ ads in total and $kn$ edges incident on $j'$ (one per ad). In each iteration be grow $B$ by adding one new matched edge, so we have at most $j$ iterations of the while loop. Therefore, the overall running time is $O(\sum_{j=1}^n j nk) = O(n^3 k)$.

However, we can come up with a more efficient implementation of \textsc{UpdatePossibleNewEdges}$(P, v, \alpha, j')$ that exploits the special structure of our valuation function so that~$P$ holds at most $n+k$ ads and only $O(k)$ edges incident on $j'$ need to be scanned without sacrificing the overall correctness of the algorithm. With this improvement in performance, each iteration of the while loop in Line~\ref{ln:while-loop} takes at most $O(\log n + k)$ work. Again, since in each iteration be grow $B$ by adding one new matched edge, we have at most $j$ iteration of the while loop. Therefore, the overall running time is $O(\sum_{j=1}^n j (k + \log n)) = O(n^2 (k + \log n))$.

\begin{theorem}
\label{thm:runtime}
    Given an input with $k$ ad types and $n$ slots, Algorithm~\ref{alg:ha} can be implemented to run in time $O(n^2 (k + \log n))$.
\end{theorem}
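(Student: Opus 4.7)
The plan is to reduce the theorem to proving a structural invariant that lets us maintain only a small number of candidate ads in the priority queue $P$. The high-level running-time analysis in the prose immediately preceding the theorem already does most of the accounting: if \textsc{UpdatePossibleNewEdges} can be implemented so that (i) at most $O(k)$ edges incident on the newly added matched slot need to be scanned, and (ii) $P$ holds at most $O(n+k)$ entries at any time, then with a Fibonacci-heap implementation of $P$ each iteration of the inner \textbf{while} loop costs $O(k + \log n)$. Since the alternating tree built during the phase that processes slot $j$ extends by one matched edge per while-iteration and has depth $O(j)$, the $j$-th phase costs $O(j(k+\log n))$, and summing $\sum_{j=1}^n j(k+\log n) = O(n^2(k+\log n))$ gives the theorem. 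So the whole proof reduces to justifying (i) and (ii).

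The first and central step is to establish the following \emph{prefix invariant}: at every moment during the execution, for each type $\theta$ the set of matched ads of type $\theta$ is exactly a prefix $\{a_1^{(\theta)}, a_2^{(\theta)}, \ldots, a_{m_\theta}^{(\theta)}\}$ of the value-sorted list, and moreover within this prefix the dual value $u_i$ is monotonically nondecreasing in the index $i$. The intuition is that every edge weight factors as $v_i \cdot \alpha^{(\theta)}_j$ with $v_i$ strictly decreasing in $i$, so at any common price vector $p$, the higher-index (lower-value) ads of type $\theta$ have strictly larger slack than the lower-index ones on every slot. I would prove this by induction on the sequence of operations the algorithm performs (starting a phase, extending the alternating tree through a matched edge, and flipping along an augmenting path), using consistent tie-breaking together with the complementary slackness maintained by the Hungarian algorithm.

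Given the invariant, the implementation becomes clean: for each type $\theta$ designate the single unmatched ad $a^{(\theta)}_{m_\theta+1}$ as the \emph{representative}, and keep only representatives in $P$. When \textsc{UpdatePossibleNewEdges} is invoked on a newly added matched slot $j'$, it scans only the $k$ representatives, computes each slack $v^{(\theta)}_{m_\theta+1}\cdot\alpha^{(\theta)}_{j'} - u_{m_\theta+1} - p_{j'}$, and inserts/decrease-keys the representative in $P$. Because any other unmatched ad of the same type has strictly smaller value and a dual $u$ that is no smaller (by the monotonicity half of the invariant), its slack to $j'$ is strictly larger than the representative's, so omitting it from $P$ never causes us to miss the next tight edge. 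This gives (i) directly, and because each type contributes at most one representative to $P$ while at most $n$ matched ads have so far been created in total, (ii) follows.

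The main obstacle is the augmenting-path step of the invariant's inductive proof: when we flip the matching along the path $A$, several ads of possibly different types change status simultaneously, and we need to argue that each type's matched set remains a prefix. The key fact to exploit is that the alternating tree, grown entirely out of representatives by our implementation, can only ever reach $a^{(\theta)}_{m_\theta+1}$ as the "entry point" into type $\theta$, and that once this representative is incorporated into an augmenting path ending at a free ad, the net effect on type $\theta$ is to turn $m_\theta$ into $m_\theta + 1$ (or leave it unchanged). Verifying this carefully — and in particular checking that the dual updates $\Delta$ applied to ads in $B$ preserve the monotonicity of $u$ within each type's matched prefix — is the delicate part; everything else is bookkeeping already handled by the standard Hungarian analysis.
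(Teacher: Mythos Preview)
Your proposal has a real gap in the implementation of \textsc{UpdatePossibleNewEdges}. You propose to keep only the single unmatched representative $a^{(\theta)}_{m_\theta+1}$ of each type in $P$ and to scan only these $k$ ads whenever a new slot enters $B$. But the alternating tree grows by finding a tight edge from a slot in $B$ to an ad \emph{not yet} in $B$; if that ad is \emph{matched} we follow its matching edge to bring another slot into $B$, and only when we reach an \emph{unmatched} ad does the phase end with an augmenting path. Your representatives are exactly the unmatched ads, so a tree ``grown entirely out of representatives'' has depth one: the first tight edge to any representative terminates the phase. In any phase where the minimum-slack edge out of the current $B$ goes to an already-matched ad---which is the typical situation once several slots are filled---your $P$ carries no candidate for it, and the tree cannot be extended. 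Your size bound ``each type contributes at most one representative to $P$ while at most $n$ matched ads have so far been created'' seems to acknowledge that matched ads should appear in $P$, but nothing in the proposal says which matched ads to insert or why only $O(k)$ of them need to be scanned per call.

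The paper closes precisely this gap with a different structural lemma, the Non-crossing Lemma: in any feasible dual, two tight edges on ads of the same type cannot cross in the $(i,j)$ order. Since every matching edge $(a^{(\theta)}_i, M(a^{(\theta)}_i))$ is tight, this means that for a fixed slot $j'$ and type $\theta$, among matched ads with $M(a^{(\theta)}_i)<j'$ only the one with \emph{largest} $i$ can ever become tight to $j'$, and among those with $M(a^{(\theta)}_i)>j'$ only the one with \emph{smallest} $i$ can. Together with the smallest-index unmatched ad this gives at most three candidates per type, hence the $O(k)$ scan while correctly covering matched ads. Your prefix invariant is plausible (a matched ad never becomes unmatched, and among unmatched ads of a type all have $u=0$ so the smallest index has least slack), but by itself it says nothing about which \emph{matched} ads can go tight to a given slot; for that you need the non-crossing argument or an equivalent.
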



Our goal for the the rest of this section is to provide an implementation \textsc{UpdatePossibleNewEdges}$(P, v, \alpha, j')$ where the size of $P$ is always at most $n+k$ and only $O(k)$ edges are considered in each invocation of the routine. Key to our analysis is the observation that tight edges cannot cross is the following sense: Given two ads $i<i'$ of the same type $\theta$, and two slots $j<j'$, then we cannot have the edge from ad $i$ to slot $j'$ be tight, and simultaneously have the edge from ad $i'$ to $j$ be tight.

\begin{lemma}[Non-crossing lemma]\label{lem:crossings}
	Given two ads $i<i'$ of the same type $\theta$, and two slots $j<j'$, if $v_i > v_{i'}$ and $\alpha^{(\theta)}_j > \alpha^{(\theta)}_{j'}$ then in any feasible dual solution we cannot have the edge from ad $i$ to slot $j'$ be tight, and simultaneously have the edge from ad $i'$ to $j$ be tight.
\end{lemma}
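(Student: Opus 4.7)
The plan is to prove the non-crossing lemma by a direct contradiction using dual feasibility. Assume for contradiction that both edges $(i, j')$ and $(i', j)$ are tight simultaneously, i.e.
\[
u_i + p_{j'} = \alpha^{(\theta)}_{j'} v_i \qquad \text{and} \qquad u_{i'} + p_j = \alpha^{(\theta)}_j v_{i'}.
\]

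Next I would invoke dual feasibility on the other two edges of the ``rectangle'' spanned by $\{i,i'\}\times\{j,j'\}$, which must hold since the dual solution is feasible:
\[
u_i + p_j \geq \alpha^{(\theta)}_j v_i \qquad \text{and} \qquad u_{i'} + p_{j'} \geq \alpha^{(\theta)}_{j'} v_{i'}.
\]
Adding the two tight equalities gives one expression for $u_i+u_{i'}+p_j+p_{j'}$, and adding the two feasibility inequalities gives a lower bound on the same quantity. Combining them yields
\[
\alpha^{(\theta)}_{j'} v_i + \alpha^{(\theta)}_j v_{i'} \;\geq\; \alpha^{(\theta)}_j v_i + \alpha^{(\theta)}_{j'} v_{i'},
\]
which rearranges to $(\alpha^{(\theta)}_j - \alpha^{(\theta)}_{j'})(v_{i'} - v_i) \geq 0$.

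Finally I would observe that by the hypotheses $\alpha^{(\theta)}_j > \alpha^{(\theta)}_{j'}$ and $v_i > v_{i'}$, the left-hand side is strictly negative, yielding the desired contradiction. I do not anticipate any genuine obstacle here: the whole argument is a standard ``exchange'' inequality on a $2\times 2$ submatrix of edge weights, exploiting the supermodularity built into the separable valuation $v_{ij} = \alpha^{(\theta)}_j v_i$ (larger discount paired with larger advertiser value produces more total weight than the ``crossed'' pairing). The only mild subtlety is using \emph{strict} monotonicity of both $v$ and $\alpha^{(\theta)}$, which the excerpt has already assumed (without loss of generality, via consistent tie-breaking).
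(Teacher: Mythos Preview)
Your proof is correct and essentially identical to the paper's own argument: assume both crossing edges are tight, add the two tightness equalities and the two dual-feasibility inequalities on the remaining edges of the $2\times 2$ rectangle, and derive $(\alpha^{(\theta)}_j - \alpha^{(\theta)}_{j'})(v_i - v_{i'}) \le 0$, contradicting strict monotonicity. The paper even remarks, as you do, that this is the ``standard exchange argument.''
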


\begin{proof}
	We prove by contradiction. If the edges between $i$ and $j'$ and $i'$ and $j$ are both tight, then we must have dual variables $u_i, u_{i'}, p_j, p_{j'}$ such that
	\begin{align*}
	\alpha^{(\theta)}_{j'} \cdot v_i &= u_i + p_{j'}\\
	\alpha^{(\theta)}_j \cdot v_{i'} &= u_{i'} + p_{j}
	\end{align*}
	At the same time, due to the slackness constraints, we must have that
	\begin{align*}
	\alpha^{(\theta)}_j \cdot v_i &\le u_i + p_j\\
	\alpha^{(\theta)}_{j'} \cdot v_{i'} &\le u_{i'} + p_{j'}.
	\end{align*}
	We can combine these and obtain
	\begin{align*}
	\alpha^{(\theta)}_{j'} \cdot v_i + \alpha^{(\theta)}_j \cdot v_{i'} &= u_i + p_{j'} + u_{i'} + p_{j} \\
	&\ge \alpha^{(\theta)}_j \cdot v_i + \alpha^{(\theta)}_{j'} \cdot v_{i'}.
	\end{align*}
	Which is false due to the standard exchange argument. We give the argument for completeness: Rearranging we have $(\alpha^{(\theta)}_j - \alpha^{(\theta)}_{j'})\cdot(v_i - v_{i'}) \le 0$; however, due to strict monotonicity $\alpha^{(\theta)}_{j} > \alpha^{(\theta)}_{j'}$ and $v_i > v_{i'}$, so have reached a contradiction.
\end{proof}

\subsubsection{\textsc{UpdatePossibleNewEdges}}

The goal of \textsc{UpdatePossibleNewEdges}$(P, v, \alpha, j')$ is to iterate over the edges incident on $j'$ that are tight or that can potentially become tight later in the execution of the current phase. For each such edge $(i', j')$ we compare its slack with the priority associated with $i'$ and update the entry for $i'$ in $P$ accordingly if needed.

The exact definition of the edges inspected is given by Algorithm~\ref{alg:new-edges}. Instead of describing how this works, let us make some observations about the set of edges that can potentially become tight, and then we shall see that the Algorithm indeed considers all these edges.

For each ad type $\theta$ we first consider the edges of the form $(a^{(\theta)}_i, j')$ where $a^{(\theta)}_i$ is matched and $M(a^{(\theta)}_i) < j'$. We claim that we only need to consider the largest such $i$. Recall that all the edges in $M$ are tight and remain tight throughout the execution of the phase; in particular, $(a^{(\theta)}_i, M(a^{(\theta)}_i))$ is tight and remains tight. Thus, any edge $(a^{(\theta)}_{i'}, j')$ with $i' < i)$ is not tight and will never become tight due the Non-crossing Lemma~\ref{lem:crossings} and the fact that $i' < i$ and $M(a^{(\theta)}_i) < j'$.

Now consider the edges of the form $(a^{(\theta)}_i, j')$ where $a^{(\theta)}_i$ is matched and $M(a^{(\theta)}_i) > j'$. We claim that we only need to consider the smallest such $i$. Recall that all the edges in $M$ are tight and remain tight throughout the execution of the phase; in particular, $(a^{(\theta)}_i, M(a^{(\theta)}_i))$ is tight and remains tight. Thus, any edge $(a^{(\theta)}_{i'}, j')$ with $i' > i$ is not tight and will never become tight due the Non-crossing Lemma~\ref{lem:crossings} and the fact that $i' > i$ and $M(a^{(\theta)}_i) > j'$.

Finally, we need to consider edges of the form $(a^{(\theta)}_i, j')$ where $a^{(\theta)}_i$ is unmatched. We claim that we only need to consider the smallest such $i$ available\footnote{It is worth noting that even this case can be ignored if there exists a matched $a^{(\theta)}_i$ such that $M(a^{(\theta)}_i) > j'$; however, for ease of presentation we add the slot to $X$ even if such $a^{(\theta)}_i$ exists.}. Indeed, for any other $i' > i$ note that $v^{(\theta)}_i > v^{(\theta)}_{i'}$ and since the $u$ variable of both ads is 0 (only slots that are part of an alternating tree get their dual variables increased and those are always matched) the slack of $(a^{(\theta)}_i, j')$ will always be smaller than the slack of $(a^{(\theta)}_{i'}, j')$ since
$v^{(\theta)}_i \alpha^{(\theta)}_{j'}  - p_{j'} < v^{(\theta)}_{i'} \alpha^{(\theta)}_{j'} - p_{j'}$. Furthermore, notice that if the edge $(a^{(\theta)}_{i}, j')$ become tight, then we immediately have an augmenting path in $B$, which concludes the phase.

It is easy to see that these three cases are precisely those covered by Algorithm~\ref{alg:new-edges}.

\begin{algorithm}
	\begin{algorithmic}[1]
		\REQUIRE $P, v, \alpha, M, j'$
		\STATE $X \gets \emptyset$
		\FOR{ad type $\theta$}
			\STATE let $a^{(\theta)}_i$ be the unmatched ad of type $\theta$ with smallest $i$
			\STATE add $a^{(\theta)}_i$ to $X$

			\IF{exists matched ad $a^{(\theta)}_i$ such that $M(a^{(\theta)}_i) < j'$}
			  \STATE let $a^{(\theta)}_i$ be such an ad with largest $i$
				\STATE add $a^{(\theta)}_i$ to $X$
			\ENDIF
			\IF{exists matched ad $a^{(\theta)}_i$ such that $M(a^{(\theta)}_i) > j'$}
				\STATE let $a^{(\theta)}_i$ be such an ad with smallest $i$
				\STATE add $a^{(\theta)}_i$ to $X$
			\ENDIF
		\ENDFOR
		\FOR{$i' \in X$} \label{ln:for-X}
			\IF{$i' \notin B$ and either $i' \notin P$ or $i'$' current slack in $P$ is $ > (v_{i', j'} - u_{i'} - \alpha_{j'})$ }
				\STATE update the priority of $i'$ using $(i', j')$ or set if $i' \notin P$ \label{ln:update-priority}
			\ENDIF
		\ENDFOR
	\end{algorithmic}
	\caption{\textsc{UpdatePossibleNewEdges}}
	\label{alg:new-edges}
\end{algorithm}

\begin{lemma}
		There can be at most $n+k$ ads in $P$ at any given point in time.
\end{lemma}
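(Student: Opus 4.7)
My plan is to partition the ads currently in $P$ into those that are matched in $M$ and those that are not, then bound each class separately. Since $M$ is modified only at the end of a phase (the call to \textsc{AddAlternatingPath} happens after the while loop), $|M|$ is fixed throughout the while loop of Line~\ref{ln:while-loop}. Concretely, when we are processing slot $j$ we have $|M| = n - j \le n - 1$.

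For matched ads, each ad has at most one entry in $P$: Algorithm~\ref{alg:new-edges} updates the existing priority of $i'$ rather than inserting a duplicate (Line~\ref{ln:update-priority}). Therefore the number of matched ads in $P$ is at most $|M| \le n - 1$.

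For unmatched ads, I will use the fact that $M$ does not change during the phase. Consequently, for each type $\theta$ the set of unmatched ads of type $\theta$ is fixed, and in particular the unique smallest-index unmatched ad $a^{(\theta)}_{i^*_\theta}$ is fixed throughout the phase. Inspecting Algorithm~\ref{alg:new-edges}, the only unmatched ads that can ever be added to $X$ are of this form, one per type, so across all invocations of \textsc{UpdatePossibleNewEdges} in the phase at most $k$ distinct unmatched ads are ever inserted into $P$. Hence at most $k$ unmatched ads are in $P$ at any time.

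Combining the two bounds yields $|P| \le (n-1) + k < n + k$, and since $P$ is reset to empty at the start of every phase, the bound holds at every point of the entire algorithm. The proof is essentially a counting argument; the only subtleties are (i) verifying that each ad has a single entry in $P$ (by inspection of Line~\ref{ln:update-priority}) and (ii) that the identity of the smallest-index unmatched ad of each type is stable within a phase (because $M$ is stable within a phase), both of which follow directly from the algorithm's specification.
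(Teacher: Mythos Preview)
Your proposal is correct and follows essentially the same argument as the paper: partition the ads in $P$ into matched and unmatched, bound the former by $|M|<n$ and the latter by $k$ (one smallest-index unmatched ad per type). Your write-up is more explicit than the paper's about why each ad occupies a single entry in $P$ and why the designated unmatched ad of each type is stable within a phase, but the underlying counting argument is identical.
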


\begin{proof}
		Notice that the the only edges $(i', j')$ that we consider in Line~\ref{ln:update-priority} are either to a matched node in $M$ or to the heighest unmatched ad of each type. There are exactly $j < n$ matched ads in phase $j$ and there are $k$ ad types, so the lemma follows.
\end{proof}

\begin{lemma}
	\textsc{UpdatePossibleNewEdges} considers only $O(k)$ edges when updating $P$ and these are the only edges we need to look at.
\end{lemma}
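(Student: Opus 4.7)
The plan is to establish the two parts separately: first the counting bound, then the sufficiency (correctness) claim. Both follow directly from the structure of Algorithm~\ref{alg:new-edges} combined with the Non-crossing Lemma and the discussion preceding the algorithm, so the proof is largely a matter of packaging that discussion as a formal argument.

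For the counting bound, I would observe that Algorithm~\ref{alg:new-edges} iterates once over each of the $k$ ad types and, per type, adds at most three ads to $X$: the unmatched ad of smallest index, at most one matched ad with $M(a^{(\theta)}_i)<j'$, and at most one matched ad with $M(a^{(\theta)}_i)>j'$. Hence $|X|\le 3k$, so the loop at Line~\ref{ln:for-X} examines at most $3k=O(k)$ edges incident on $j'$.

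For sufficiency, I would fix any ad $a^{(\theta)}_{i'}\notin X$ with $a^{(\theta)}_{i'}\notin B$ and show that the edge $(a^{(\theta)}_{i'},j')$ either cannot become tight at any point during the current phase, or its slack is pointwise dominated by one of the edges already considered, so leaving it out of $P$ is harmless. I would split into three cases according to the same partition used in the algorithm. If $a^{(\theta)}_{i'}$ is matched with $M(a^{(\theta)}_{i'})<j'$, let $a^{(\theta)}_{i}$ be the selected ad in that bucket; by choice $i'<i$ and $M(a^{(\theta)}_{i})<j'$, while $(a^{(\theta)}_{i},M(a^{(\theta)}_{i}))\in M$ is tight and stays tight throughout the phase. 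Applying Lemma~\ref{lem:crossings} with index pair $(i',i)$ and slot pair $(M(a^{(\theta)}_{i}),j')$ immediately rules out $(a^{(\theta)}_{i'},j')$ ever being tight under any feasible dual, including those traversed during the phase. The case $M(a^{(\theta)}_{i'})>j'$ is symmetric: the selected $a^{(\theta)}_{i}$ has smallest index in its bucket, so $i<i'$ and $j'<M(a^{(\theta)}_{i})$, and a second invocation of Lemma~\ref{lem:crossings} gives the conclusion.

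The third case, unmatched ads, cannot use the Non-crossing Lemma (there is no matched anchor), so I would argue directly with slacks. For the selected unmatched $a^{(\theta)}_{i}$ (smallest index) and any other unmatched $a^{(\theta)}_{i'}$ with $i<i'$, both have $u_i=u_{i'}=0$ at all times during the phase, since only ads in $B$ receive dual increases and those are matched. Hence the slacks $p_{j'}-v^{(\theta)}_{i}\alpha^{(\theta)}_{j'}$ and $p_{j'}-v^{(\theta)}_{i'}\alpha^{(\theta)}_{j'}$ satisfy the first being strictly smaller throughout, so $(a^{(\theta)}_{i},j')$ becomes tight first; moreover, when it does, it immediately closes an augmenting path in $B$ and terminates the phase, so $(a^{(\theta)}_{i'},j')$ is irrelevant. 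I expect the only mildly delicate step is the bookkeeping of indices in the two Non-crossing Lemma invocations (swapping which role is $i$ vs.\ $j$ across the $M(a^{(\theta)}_{i})<j'$ and $M(a^{(\theta)}_{i})>j'$ cases); otherwise everything reduces to the already-proved lemma and the observation that matched edges remain tight throughout a phase.
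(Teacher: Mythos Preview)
Your proposal is correct and follows essentially the same approach as the paper: the counting bound via $|X|\le 3k$ is identical, and your three-case sufficiency argument (two invocations of the Non-crossing Lemma for matched ads plus the direct slack comparison for unmatched ads) is exactly the content of the discussion the paper places just before Algorithm~\ref{alg:new-edges} and then references in its one-paragraph proof. The only cosmetic difference is that you fold that discussion into the proof itself rather than pointing back to it.
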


\begin{proof}
		The reason why we can focus just on the edges in Line~\ref{ln:for-X} has already been explained in the description of the algorithm \textsc{UpdatePossibleNewEdges}, so we only need to argue about their number. Clearly, each ad type generated at most three ads into $X$, thus, the number of edges considered when updating $P$ is at most $3k$.
\end{proof}

\subsection{Large Number of Ad Types}
\label{ss:kn}
When $k=n$, Algorithm~\ref{alg:ha} is no faster than running the standard Hungarian algorithm. The following lemma shows that this is to be expected as any instance of the assignment problem can be reduced to an instance where all ads agree on the order of the slots.

\begin{lemma}
	With $k=n$ ad types, the problem is no easier to solve than the assignment problem, even with monotone discount curves.
\end{lemma}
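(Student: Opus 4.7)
The plan is to reduce the general $n \times n$ assignment problem to an Ad Types instance with $k = n$ types. Given an arbitrary weight matrix $w_{ij}$ on $n$ ads and $n$ slots, I first shift (WLOG) so that all $w_{ij} \ge 0$; this adds the same constant to every perfect matching and preserves argmaxes. I then create one ad type per original ad: ad type $i$ has a single real ad with value $v^{(i)}_1 = 1$ and discount curve
\[
\alpha^{(i)}_j \;=\; w_{ij} + (n-j)\, C,
\]
where $C > \max_{i,j} w_{ij}$. If the formalism insists on $n$ ads per type, I pad each type with $n-1$ extra dummy ads of value $0$.

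The first thing to check is that these discount curves are valid Ad Types discount curves. For each $i$,
\[
\alpha^{(i)}_j - \alpha^{(i)}_{j+1} \;=\; \bigl(w_{ij} - w_{i,j+1}\bigr) + C \;>\; 0,
\]
since $|w_{ij} - w_{i,j+1}| \le \max w_{ij} < C$. So each curve is strictly monotone decreasing, nonnegative, and all $n$ types agree that slot $1$ is most preferred, slot $2$ next, and so on.

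Next I would verify that matchings correspond. For any perfect matching $\sigma$ from ads to slots,
\[
\sum_{i=1}^{n} v^{(i)}_1 \cdot \alpha^{(i)}_{\sigma(i)}
\;=\; \sum_{i=1}^{n} w_{i,\sigma(i)} \;+\; C \cdot \sum_{i=1}^{n}(n-\sigma(i))
\;=\; \sum_{i=1}^{n} w_{i,\sigma(i)} \;+\; C\cdot\tfrac{n(n-1)}{2},
\]
and the additive term depends only on $n$. Since all $\alpha^{(i)}_j$ are strictly positive, every optimal Ad Types matching is a perfect matching on the real (non-dummy) ads; therefore the argmax over Ad Types matchings equals the argmax over perfect matchings of $w$, i.e., the optimum of the original assignment instance.

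The reduction itself runs in $O(n^2)$ time, which is dominated by the size of the assignment input. So any algorithm solving the Ad Types problem with $k=n$ in time $T(n)$ yields a general assignment algorithm running in time $T(n) + O(n^2)$. I don't expect a real obstacle here: the only substantive idea is the slot-dependent additive shift $(n-j)C$, which enforces monotonicity of each discount curve while contributing the same constant to every perfect matching.
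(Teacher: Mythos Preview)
Your argument is correct and is essentially the same as the paper's: both reductions add a slot-dependent term $(n-j)\cdot C$ (the paper uses $v^* = 1+\max_{ij} v_{ij}$) to the edge weights to force each ad's curve to be monotone, observing that this adds the same constant $C\cdot n(n-1)/2$ to every perfect matching. You are a bit more explicit than the paper about the $v_i=1$/$\alpha^{(i)}_j$ decomposition, the dummy padding, and the fact that positivity of the $\alpha$'s forces the optimum to be a perfect matching on the real ads, but the core idea is identical.
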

\begin{proof}
	We'll reduce to the assignment problem. Given an instance of the assignment problem (with positive edge values), let $v^* = 1 + \max_{(i,j) \in E} v_{ij}$, indicate one of the sides of the bipartite graph as the slots, and choose an order $\sigma$. For all incident edges to the lowest slot, add $0$ to the edge. For the second to last edge, add $v^*$ to the edge, for the third to last edge add $2v^*$, and so on. 	The value of any perfect matching has increased by $\frac{m(m-1)}2v^*$, which is a constant. So the solution to the optimal perfect matching is the same as the original problem. Note that now for each ad, the slots are monotonically decreasing according to order $\sigma$, so for each ad there's a decomposition into a base value, and a discount for each slot which is monotonically decreasing.
\end{proof}

\subsection{Computing VCG Prices}
\label{ss:vcg}
Algorithm~\ref{alg:ha} gives us a fast way to compute the optimal allocation for the Ad Types problem. The VCG mechanism \cite{V61, C71, G73} can be use to compute corresponding incentive compatible prices, by running the allocation algorithm $n+1$ times, for a total running time of $O(n^3(k + \log n))$. In this section we show that in fact we can compute VCG prices in asymptotically identical time to solving the allocation problem once: $O(n^2(k + \log n))$. It is a folk theorem that an instance of the \emph{assignment problem} can be priced in asymptotically the same time as solving the assignment problem (namely $O(n^3)$ for balanced bipartite graphs); this section shows that our improved running time carries over as well.

We begin with the well-known fact (e.g. \cite[Chapter 15]{EK10}) that VCG prices and buyer utilities are feasible (and optimal) dual variables for the assignment problem. Additionally, the VCG prices correspond to the \emph{point-wise lowest} possible dual variables for the slots over all feasible solutions to the dual. Thus, we can find VCG prices by finding the lowest dual variables.

\begin{theorem}
\label{thm:vcg}
Given an instance $(U, V, E)$ of the assignment problem, along with an optimal matching $M$, and shadow prices $\bf p$. Algorithm~\ref{alg:vcg} computes VCG prices in $O(|E|+|V|\cdot \log|E|)$ time.
\end{theorem}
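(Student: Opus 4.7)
The plan is to exploit the observation stated just before the theorem: VCG prices equal the pointwise-smallest feasible slot duals, i.e.\ for each slot $j^*$ we want the minimum $p_{j^*}'$ over all \emph{optimal} dual solutions $(\mathbf{u}',\mathbf{p}')$. Starting from the given optimal dual $(\mathbf{u},\mathbf{p})$, it suffices to compute, for every $j^*$, the maximum amount $\Delta_{j^*}$ by which $p_{j^*}$ can be decreased while preserving dual feasibility and optimality. I would reduce this to a single-source shortest-path computation on an auxiliary digraph and solve it with Dijkstra.

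To set up the reduction, parametrise any perturbation as $\tilde u_i = u_i + \epsilon_i$ and $\tilde p_j = p_j + \delta_j$, write $\Delta_j = -\delta_j$, and let $s_{ij} = u_i + p_j - v_{ij} \geq 0$ denote the current slack. Complementary slackness forces matching edges to stay tight, giving $\epsilon_i = \Delta_{M(i)}$ for every matched $i$, and forces $\tilde u_{i'} = 0$ for every unmatched $i'$, giving $\epsilon_{i'} = 0$. Substituting, the remaining dual-feasibility and non-negativity conditions become the difference inequalities $\Delta_{j'} \leq \Delta_j + s_{M(j),j'}$ for each matched $j$ and every $j' \neq j$ adjacent to $M(j)$, together with the box constraints $\Delta_{j'} \leq c_{j'} := \min\bigl(p_{j'},\,\min_{i'\ \text{unmatched}} s_{i',j'}\bigr)$.

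I would then build an auxiliary digraph $H$ on $V \cup \{s\}$ with arcs $j \to j'$ of weight $s_{M(j),j'}$ (for matched $j$ and $j' \neq j$ adjacent to $M(j)$) and arcs $s \to j$ of weight $c_j$ for every $j \in V$. All arc weights are non-negative, so Dijkstra applies. Telescoping the difference inequalities along any $s$-to-$j^*$ path yields $\Delta_{j^*} \leq d_H(s,j^*)$; conversely, setting $\Delta_j := d_H(s,j)$ satisfies every constraint (by the triangle inequality for shortest paths), and then $\tilde u_i := u_i + \Delta_{M(i)}$, $\tilde p_j := p_j - d_H(s,j)$ is a genuine optimal dual witnessing equality. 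Hence $p_{j^*}^{\mathrm{VCG}} = p_{j^*} - d_H(s,j^*)$, and a single Dijkstra call returns all VCG prices at once. The graph has $|V|+1$ vertices and $O(|E|)$ arcs (one per edge of $E$ incident on a matched buyer, plus $|V|$ arcs out of $s$), so Dijkstra with an appropriate heap runs within the claimed $O(|E| + |V|\log|E|)$ bound. The subtle step is the box bound $c_j$: I need to check that non-negativity of slot prices and the constraints coming from unmatched buyers both collapse into a single per-slot quantity, so that they can be encoded as one arc out of the virtual source and the auxiliary graph stays sparse.
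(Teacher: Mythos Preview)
Your reduction is correct: the difference constraints you derive are exactly the conditions characterising optimal duals, the auxiliary digraph $H$ has non-negative weights, and the shortest-path distances $d_H(s,j)$ simultaneously realise the pointwise-minimal slot prices (the lower bounds $\Delta_{M(i)}\ge -u_{M(i)}$ are automatically satisfied since $d_H(s,\cdot)\ge 0$).  The edge and time accounting is also right, including the $O(|E|)$ preprocessing to compute the caps $c_j$.

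The paper's route is different in presentation though equivalent in substance.  It analyses Algorithm~\ref{alg:vcg} directly as an iterative dual-update procedure: maintain a set $R$ of ``mutable'' matched vertices, repeatedly find the smallest $\Delta$ that either drives some $p_j$ to~$0$ or tightens an edge leaving $R$, apply that update, and freeze the newly tight part via an alternating BFS.  Correctness is argued tersely (``no price could be lower''), and the running time is obtained by charging each edge once to the BFS and each of the $|V|$ freeze events to a Fibonacci-heap extract-min.  Under the hood this \emph{is} Dijkstra on your graph~$H$ --- the cumulative $\Delta$ at the moment a slot leaves $R$ is precisely $d_H(s,j)$ --- but the paper never names the shortest-path structure.

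What your approach buys is clarity and a self-contained correctness argument; what it costs is that, as written, it proves ``\emph{some} algorithm computes VCG prices in the stated time'' rather than ``Algorithm~\ref{alg:vcg} does''.  If you want to match the theorem statement exactly, add one sentence identifying Algorithm~\ref{alg:vcg} with Dijkstra on $H$ (settled $=$ immutable, key $=$ current tentative $\Delta_j$); then both correctness and the time bound follow from your analysis.
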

\begin{proof}
    Algorithm~\ref{alg:vcg} is similar to a single ``phase'' of Algorithm~\ref{alg:ha}. To find $\Delta$ we use a Fibonnacci heap on the edges where $(i\in R, r\not\in R)$ or $(i\not\in R, r\in R)$. During each while loop, we mark at least one slot as immutable, so the total number running time to find $\Delta$ over the entire algorithm is $O(|V|\log |E|)$.
    
    The alternating BFS trees can only touch each edge ones over the course of all iterations, hence the total time this takes is bounded by $O(|E|)$, finally each node is marked as immutable at most once, hence the total time over all iterations of line 6 is $O(|V|)$. This yields total running time of $O(|E| + |V|\log |E|)$ as claimed. Note that no price could be lower, since all nodes either have price $0$ or lowering their price would violate a slackness constraint. Thus, the dual variables for the slot correspond to VCG prices.
\end{proof}

\begin{algorithm}
	\begin{algorithmic}[1]
		\REQUIRE $U, V, E, M, \bf p$
		\ENSURE ${\bf p}_{vcg}$
        \STATE Let $R$ be the set of nodes in the matching
         \WHILE{$R \neq \emptyset$}
		\STATE Let $\Delta = \min(\min_{i\in R}p_i, \min_{(i,j)\in E, (i\in R, r\not\in R) \text{ or } (i\not\in R, r\in R)} v_{ij} - u_i - p_j)$
		\STATE Implicitly update the dual solution so that \\
		$\quad \circ\  u_{i''} \leftarrow u_{i''} + \Delta$ for all ads $i''\ \in R$, \\
		$\quad \circ\ p_{j''} \leftarrow p_{j''} - \Delta$ for all slots $j''\in R$.
		\STATE Do alternating BFS $B$ on tight edges from all lowest prices, i.e. the slots that have $p_j = 0$ or have a new tight edge to an ad outside $R$ or the ads that have a new tight edge to an ad outside $R$.
		\STATE Mark all visited nodes in $B$ as ``immutable'', remove them from $R$.
	    \ENDWHILE
	    \STATE Explicitly update the dual variables.
	    \RETURN {\bf p}
	\end{algorithmic}
	\caption{Compute VCG prices for a given solution to an instance of the assignment problem.}
	\label{alg:vcg}
\end{algorithm}

\begin{corollary}
    Given an instance of the Ad Types problem, we can compute the optimal allocation and VCG prices in $O(n^2(k + \log n))$ time.
\end{corollary}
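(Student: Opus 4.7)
The plan is to simply combine the two preceding results: Theorem~\ref{thm:runtime} for the allocation and Theorem~\ref{thm:vcg} for the pricing. First I would run Algorithm~\ref{alg:ha} on the Ad Types instance, which by Theorem~\ref{thm:runtime} produces both the optimal matching $M^*$ and a feasible pair of optimal dual variables $({\bf u}, {\bf p})$ in $O(n^2(k+\log n))$ time. The duals at termination are optimal but are not yet pointwise-minimum, so they do not directly correspond to VCG prices; this is exactly what Algorithm~\ref{alg:vcg} was designed to fix.

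Second, I would feed the instance, $M^*$, and ${\bf p}$ into Algorithm~\ref{alg:vcg}. The Ad Types graph has $N=nk$ ads and $n$ slots, so the complete bipartite graph has $|E| = n \cdot nk = n^2 k$ edges and $|V| = O(nk)$ nodes. Substituting into the bound $O(|E| + |V|\log|E|)$ from Theorem~\ref{thm:vcg}, and noting that in the proof of that theorem the number of extract-min operations is controlled by the number of slots (each iteration marks at least one slot as immutable), we get a pricing cost of
\begin{equation*}
O\bigl(n^2 k + n \log(n^2 k)\bigr) \;=\; O\bigl(n^2 k + n\log n + n \log k\bigr),
\end{equation*}
which is contained in $O(n^2(k+\log n))$.

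Summing the two phases yields a total running time of $O(n^2(k+\log n))$ for producing the optimal allocation together with VCG prices. The main thing to verify is that no step of Algorithm~\ref{alg:vcg} needs to be specialized further to exploit the Ad Types structure: the $|E|=n^2 k$ term from edge scans is already dominated by the first summand $n^2 k$, and the logarithmic term $n\log k$ coming from heap operations is trivially bounded by $n^2(k+\log n)$, so no non-crossing-style pruning of the kind used in Algorithm~\ref{alg:ha} is required. Hence the two theorems combine cleanly and the corollary follows.
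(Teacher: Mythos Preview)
Your proposal is correct and follows essentially the same route as the paper: invoke Theorem~\ref{thm:runtime} for the allocation, then plug $|E|=kn^2$ into Theorem~\ref{thm:vcg} and observe (as the paper does implicitly) that the heap term is governed by the $n$ slots rather than all $nk$ ads, giving $O(kn^2 + n\log(kn))\subseteq O(n^2(k+\log n))$. The extra remark that no non-crossing pruning is needed in the pricing phase is sound and merely elaborates on why the two theorems combine cleanly.
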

\begin{proof}
    By theorem~\ref{thm:runtime} we can compute an optimal allocation in $O(n^2(k + \log n))$ time. By Theorem~\ref{thm:vcg} and since $|E| = kn^2$, from the optimal allocation we can compute the corresponding VCG prices in $O(kn^2 + n\log kn)$ hence the total time is $O(n^2(k + \log n))$ as claimed.
\end{proof}

\subsection{Applying Reserve Prices}
\label{ss:reserve-prices}
Setting a reserve price, or a minimum bid, is a common way to control revenue \cite{M81}. But how can we apply a reserve price in the ad types problem? In our setting, where we have discrete allocation amounts (the discounts $\alpha_j^\theta$ for an ad of type $\theta$), Myerson's payment identity boils down to determining the changepoints in the mechanism, given the other bids $b_{-i}$. We could find these changepoints by updating the bid of an advertiser while maintaining the invariants in the Hungarian algorithm, but we present a simpler approach here that may be of independent interest to the mechanism design community. The idea is similar to the externality computation in VCG, but rather than removing the advertiser in the alternative allocation, we compare against an alternative where the advertiser bids their reserve price.

Since this approach works in single-parameter environments beyond the Ad Types Problem, we use the more general notation introduced in Section~\ref{ss:auction-theory}. Recall that we have $n$ strategic agents, outcome space $\Omega$, function $x_i : \Omega \rightarrow \RR$ from outcomes to a quantity of ``stuff'', and valuation function $v_i(\omega) = x_i(\omega)\cdot v_i$ where we overload notation and let $v_i\in \RR$. Additionally, we have agent-specific reserve prices $r_i$ below which we won't serve an agent. Finally, we have black-box access to an algorithm $\XX : \RR^{n} \rightarrow \Omega$ that given $n$ bids $\bf b$, yields an outcome $\omega$ that maximizes $\sum_{i=0}^n v_i(\omega)$, the social welfare of strategic agents.

The goal in this setting is to come up with an incentive-compatible pricing scheme to match the social welfare-maximizing (subject to reserve prices) allocation. Due to Myerson, we know that if we can compute the change points in the allocation function, we can recover the (unique up to translation) pricing rule that is incentive compatible. However, generally computing change points may be costly. The goal then is to compute incentive compatible prices without change point computation.

\paragraph{Incentive Compatible Pricing without Changepoint Computation.}
The algorithm is given in Algorithm~\ref{alg:pricing}, where we overload $\XX$ to take the bids of the set passed to it, and we use the notation $\XX(b_i\rightarrow r_i, B'_{-i})$ to indicate applying $\XX$ to the set $B'_{-i}$ plus bidder $i$ with their bid replaced by $r_i$. In the traditional VCG algorithm, the buyer's externality is computed by removing them from the auction entirely and charging the change in utility of the remaining buyers. The difference here is that we're not removing the buyer, but instead we replace their bid by the reserve price. Additionally we charge them $r_i$ per unit of stuff they receive if they bid their reserve price. Intuitively this payment scheme corresponds to Myerson payments, where we only explicitly compute the change point at the reserve price, and we implicitly compute reserve prices by looking at the difference in outcomes between bidding $b_i$ and $r_i$.

\begin{algorithm}
	\begin{algorithmic}[1]
		\REQUIRE Set of buyers $B$, with bids ${\bf b} \in \RR^n$, reserves ${\bf r} \in \RR^n$.
		\ENSURE Outcome $\omega$, prices $p_i \in \RR^n$.
		\STATE $B' \leftarrow \{i \in B: b_i \ge r_i \}$
		\STATE For buyer $i\not\in B'$, $p_i \leftarrow 0$
		\STATE $\omega \leftarrow \XX(B')$
		\FOR{buyer $i\in B'$}
		\STATE $SW_{-i}(B') \leftarrow \sum_{j \in B' \backslash \{i\}} v_j(\omega)$
		\STATE $\omega' \leftarrow \XX(b_i\rightarrow r_i, B'_{-i})$
		\STATE $SW_{-i}(b_i\rightarrow r_i, B'_{-i}) \leftarrow \sum_{j \in B' \backslash \{i\}} v_j(\omega')$
		\STATE $p_i \leftarrow SW_{-i}(b_i \rightarrow r_i, B'_{-i}) - SW_{-i}(B') + x_i(\omega')\cdot r_i$
		\ENDFOR
		\RETURN $(\omega, \bf p)$
	\end{algorithmic}
	\caption{Incentive-compatible pricing without change point computation.}
	\label{alg:pricing}
\end{algorithm}

Myerson \cite{M81} showed that the payments that yield an incentive compatible auction are uniquely determined given an allocation rule, up to an additive constant. Of special interest are payments where losers pay nothing. Therefore, it suffices to show that the payments computed by Algorithm~\ref{alg:pricing} are incentive compatible and charge losers nothing to prove that they are equivalent to applying Myerson's payment identity. In the following let $SW(B) =  \sum_{j \in B}v_j\cdot x_j(\omega)$ for $\omega = \argmax_{\omega \in \Omega} \sum_{j \in B}b_j\cdot x_j(\omega)$---i.e. the social welfare of the optimal allocation assuming bids equal values---with $SW(B) = SW_i(B) + SW_{-i}(B)$ such that $SW_{i}$ is only the contribution of $i$ to $SW(B)$ and $SW_{-i}(B)$ is the contribution of the remaining bidders $\{j \in B: j\neq i\}$. Similarly, let $SW(b_i\rightarrow r_i, B_{-i}) =  \sum_{j \in B}v_j\cdot x_j(\omega')$ for $\omega' = \argmax_{\omega \in \Omega} r_i\cdot x_i(\omega) + \sum_{j\in B, j\neq i}b_j\cdot x_j(\omega)$ i.e. the social welfare of the optimal allocation assuming bids equal values and $b_i$ replaced by $r_i$, with $SW_i(b_i\rightarrow r_i, B_{-i})$ and $SW_{-i}(b_i \rightarrow r_i, B_{-i})$ defined analogously.

\begin{theorem}
Algorithm~\ref{alg:pricing} yields the unique Incentive Compatible pricing scheme for $\XX$ with eager reserves that charges losers $0$.
\end{theorem}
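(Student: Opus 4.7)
The plan is to invoke Myerson's characterization of incentive-compatible mechanisms in single-parameter environments. Fix $b_{-i}$ and, slightly overloading notation, let $x_i(z)$ denote the $x_i$-quantity that bidder $i$ receives under the eager-reserve-plus-$\XX$ allocation rule implicit in Algorithm~\ref{alg:pricing} when reporting $z$ against $b_{-i}$. Myerson's theorem says this mechanism is IC if and only if (a) $x_i(\cdot)$ is monotone non-decreasing in $z$ and (b) $p_i(b_i) = b_i\, x_i(b_i) - \int_0^{b_i} x_i(z)\,dz + C$ for some constant $C$; requiring losers to pay $0$ pins $C = 0$. It therefore suffices to check monotonicity and to verify that line~8 of the algorithm outputs exactly this Myerson payment; uniqueness then falls out of Myerson's theorem.

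I would first dispense with the easy steps. Losers pay $0$ by construction (line~2). For monotonicity: $x_i(z) = 0$ for $z < r_i$, while for $z \ge r_i$ the bidder enters $B'$ and a standard exchange argument applied to the welfare-maximizer $\XX$ shows that the quantity returned to $i$ is non-decreasing in $z$; continuity across the reserve threshold is automatic since on the left $x_i = 0$ and on the right $x_i \ge 0$.

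The substantive step is to show that line~8's expression coincides with $p_i^M(b_i)$. I would use the envelope theorem. Define
\[
\Phi(z) \;:=\; \max_{\omega \in \Omega}\Bigl[\, z\cdot x_i(\omega) \;+\; \sum_{j \in B'_{-i}} b_j\cdot x_j(\omega) \,\Bigr],
\]
the maximum bid-weighted welfare when $i$ bids $z$ (with others bidding $b_{-i}$). As a pointwise supremum of affine functions of $z$, $\Phi$ is convex with $\Phi'(z)$ equal to the $x_i$-quantity at the maximizing outcome almost everywhere, i.e.\ $\Phi'(z) = x_i(z)$ in our overloaded notation. Since $x_i(z) = 0$ for $z < r_i$, this yields
\[
\int_0^{b_i} x_i(z)\,dz \;=\; \Phi(b_i) - \Phi(r_i).
\]
Writing $\omega := \XX(B')$ and $\omega' := \XX(b_i \to r_i, B'_{-i})$, so that $\Phi(b_i) = b_i\, x_i(\omega) + \sum_{j \in B'_{-i}} b_j\, x_j(\omega)$ and $\Phi(r_i) = r_i\, x_i(\omega') + \sum_{j \in B'_{-i}} b_j\, x_j(\omega')$, substitution into $p_i^M(b_i) = b_i\, x_i(\omega) - \Phi(b_i) + \Phi(r_i)$ cancels the $b_i\, x_i(\omega)$ term and leaves
\[
p_i^M(b_i) \;=\; \sum_{j \in B'_{-i}} b_j\, x_j(\omega') \;-\; \sum_{j \in B'_{-i}} b_j\, x_j(\omega) \;+\; r_i\cdot x_i(\omega'),
\]
which matches line~8 precisely under the paper's convention that other bidders' reports equal their values inside the $SW_{-i}$ expression.

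The main obstacle will be the envelope step: since $x_i(\cdot)$ is piecewise-constant with jumps at the change points, $\Phi$ is piecewise-linear convex and its derivative exists only almost everywhere, so the integral must be read in the Lebesgue sense. This is standard but worth flagging. Once the envelope identity is in hand the rest is mechanical bookkeeping, and uniqueness is immediate from Myerson's theorem.
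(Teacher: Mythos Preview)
Your proof is correct and takes a genuinely different route from the paper's. The paper argues incentive compatibility directly: it computes the bidder's utility and shows it collapses to $SW(B') - SW(b_i \to r_i, B'_{-i})$ when $b_i \ge r_i$ (and $0$ otherwise), then observes that the second term is independent of $b_i$ and the first is maximized by truthful reporting, with a separate check that overbidding past the reserve is unprofitable when $v_i < r_i$. Uniqueness is then read off from Myerson after the fact. You instead invoke Myerson's characterization up front, verify monotonicity of the allocation rule, and use the envelope identity $\int_{r_i}^{b_i} x_i(z)\,dz = \Phi(b_i)-\Phi(r_i)$ to show line~8 \emph{is} the Myerson payment. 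Your approach is more systematic and makes the connection to the payment identity explicit (and in particular makes transparent why the $r_i\cdot x_i(\omega')$ term appears), at the cost of importing the envelope machinery; the paper's direct utility computation is more elementary and self-contained but somewhat sloppier notationally in the step equating $x_i(\omega')\cdot r_i$ with $SW_i(b_i\to r_i,B'_{-i})$. Both arrive at the same conclusion, and your flag about the a.e.\ derivative of $\Phi$ is the right level of care for the piecewise-constant setting.
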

\begin{proof}
    \begin{align*}
        u_i(b_{-i}, b_i) &= \begin{cases}
            x_i({\bf b}) v_i - \left( SW_{-i}(b_i\rightarrow r_i, B'_{-i}) - SW_{-i}(B') + x_i(\omega')\cdot r_i\right) & \text{if }b_i \geq r_i \\
            0 & \text{otherwise}\\
            \end{cases}\\
            &= \begin{cases}
            SW_i(B') - \left( SW_{-i}(b_i \rightarrow r_i, B'_{-i}) - SW_{-i}(B') + SW_i(b_i \rightarrow r_i, B'_{-i})\right) & \text{if }b_i \geq r_i \\
            0 & \text{otherwise}\\
            \end{cases}\\
            &= \begin{cases}
            SW(B') - SW(b_i\rightarrow r_i, B'_{-i}) & \text{if }b_i \geq r_i \\
            0 & \text{otherwise}\\
            \end{cases}
    \end{align*}

    We treat the cases for $v_i < r_i$ and $v_i \ge r_i$ separately.

    If $v_i<r_i$, then $u_i(b_i, b_{-i}) = 0$ for all $b_i<r_i$, so there's no beneficial deviation from $b_i = v_i$ to some other $b_i < r_i$. Moreover, bidding $b_i> r_i$ can only yield negative utility: first observe that if $b_i = r_i$, it must hold that $SW(B') \le SW(b_i\rightarrow r_i, B'_{-i})$ by definition since the latter is the optimal social welfare achievable (if $b_i = r_i$), hence the former can only be lower. This implies that the utility must be non-positive for any $v_i \le r_i$ since the value per-quantity good is lower than if $v_i = r_i$ but the price stays the same. So for $v_i<r_i$ there is no beneficial deviation.

    If $b_i\ge r_i$, notice that the term $SW(b_i\rightarrow r_i, B'_{-i})$ is independent of the bid of bidder $i$. So the only relevant term is $SW(B')$. Since $\XX$ compute the maximum-weight bipartite matching given the bids, bidding $v_i$ yields the maximum utility; hence no other bid $b_i \ge r_i$ can improve utility. Moreover, since the utility is guaranteed to be non-negative, bidding $b_i<r_i$ (which yields $u_i = 0$) cannot increase utility.

    So in all cases, a bidder maximizes their utility by bidding $v_i$. Moreover, in both cases, if $x_i(\omega) = 0$, the bidder pays nothing, so Algorithm~\ref{alg:pricing} yields the unique incentive compatible prices where losers pay 0.
\end{proof}

Thus, if we use Algorithm~\ref{alg:ha} to compute the max-weight bipartite matching after applying reserve prices, Algorithm~\ref{alg:pricing} runs in $O(n^3(k + \log n))$.

\begin{corollary}
    \label{cor:myerson}
    Algorithm~\ref{alg:pricing} runs in $O(n^3(k + \log n))$ for the Ad Types Problem without gap rules.
\end{corollary}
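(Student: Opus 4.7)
The plan is to bound the total running time of Algorithm~\ref{alg:pricing} by counting the number of invocations of $\XX$ and multiplying by the cost of each invocation, which is $O(n^2(k+\log n))$ by Theorem~\ref{thm:runtime}. Algorithm~\ref{alg:pricing} makes one call to $\XX$ on line~3 and then one call per buyer $i \in B'$ inside the for loop. Since the Ad Types instance has up to $N=kn$ ads, a naive count gives $O(kn)$ invocations and a total of $O(kn^3(k+\log n))$, which is a factor of $k$ too slow.

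The key idea I would use is that losers in $\omega$ can be priced at $0$ with no call to $\XX$. Suppose $i \in B'$ is a loser, i.e.\ $x_i(\omega)=0$. Since $i \in B'$ we have $r_i \le b_i$, and the max-weight matching allocation is monotone in each edge weight: lowering $b_i$ to $r_i$ only weakens $i$'s edges, so the optimal matching $\omega'$ with $b_i$ replaced by $r_i$ agrees with $\omega$ on $B'_{-i}$ and still leaves $i$ unmatched. Hence $SW_{-i}(b_i\to r_i,B'_{-i}) = SW_{-i}(B')$ and $x_i(\omega') = 0$, giving $p_i = 0$. I would therefore augment Algorithm~\ref{alg:pricing} with a check at the top of the for loop that sets $p_i \leftarrow 0$ and continues whenever $x_i(\omega)=0$, avoiding the call to $\XX$ for that iteration.

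Since there are only $n$ slots, at most $n$ ads are winners in $\omega$, so this modification leaves at most $n+1$ invocations of $\XX$. Each invocation costs $O(n^2(k+\log n))$ by Theorem~\ref{thm:runtime} using Algorithm~\ref{alg:ha}, and the remaining per-iteration bookkeeping (summing $SW_{-i}$ over $B'_{-i}$, etc.) is $O(kn)$, which is dominated. The total cost is therefore $O(n \cdot n^2(k+\log n)) = O(n^3(k+\log n))$, as claimed.

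The main obstacle is justifying the monotonicity claim used to discard losers. For the Ad Types Problem, edge weights take the form $v_{ij} = \alpha^{(\theta)}_j v_i$, which is nonnegative and nondecreasing in $v_i$. A short exchange argument, comparing the optimal matchings at bids $b_i$ and $r_i$ and swapping along the symmetric difference, shows that lowering a single ad's value cannot turn an unmatched ad into a matched one nor alter the matching of the remaining ads. This is standard for welfare-maximizing allocation rules in single-parameter environments, but it is the one nontrivial fact that enables the reduction from $O(kn)$ to $O(n)$ nontrivial loop iterations.
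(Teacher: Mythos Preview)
Your argument is correct and, in fact, more careful than the paper's own treatment. The paper states the corollary with only the one-sentence justification ``Thus, if we use Algorithm~\ref{alg:ha} to compute the max-weight bipartite matching after applying reserve prices, Algorithm~\ref{alg:pricing} runs in $O(n^3(k + \log n))$,'' without explaining why the for loop over $B'$ costs only $O(n)$ invocations of $\XX$ rather than $O(kn)$. You correctly identify this gap: since the instance has $n$ ads of each of $k$ types, $|B'|$ can be as large as $kn$, and a literal reading of Algorithm~\ref{alg:pricing} gives $O(kn)$ calls to $\XX$ and a bound of $O(kn^3(k+\log n))$.

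Your fix---skipping the call to $\XX$ whenever $x_i(\omega)=0$ because such a loser necessarily has $p_i=0$---is sound. The justification is even simpler than the exchange argument you sketch: if $i$ is unmatched in the optimal matching $\omega$ at bids $b$, then $\omega$ remains feasible and has the same weight when $b_i$ is lowered to $r_i$ (none of $i$'s edges were used), while every other matching's weight can only weakly decrease; hence $\omega$ is still optimal, $x_i(\omega')=0$, and $SW_{-i}$ is unchanged. With at most $n$ winners, you get at most $n+1$ calls to $\XX$, each costing $O(n^2(k+\log n))$ by Theorem~\ref{thm:runtime}, and the bookkeeping is dominated, yielding the claimed $O(n^3(k+\log n))$. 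This is presumably what the authors had in mind but left implicit; your write-up makes the argument complete.
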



\section{The Ad Types Problem with Gap Rules}

In this section we switch our attention to the full version of the Ad Types problem where we do have gap rules. As we shall see, this problem is much harder if we do not place any restriction on the instances. On the positive side, we show that the problem is fixed parameter tractable on the number of ad types.

\subsection{Hardness}

In this subsection we prove that even approximating an optimal allocation for the Ad Types problem with Gap Rules is hard, even when there are no discount curves.

\begin{theorem}
\label{thm:hard}
  The Ad Types problem with Gap Rules hard to approximate better than $k^{1-\epsilon}$ for any $\epsilon > 0$, unless P = NP, even when the discount curves of all the ad types are identically equal to $1$.
\end{theorem}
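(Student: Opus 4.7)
The plan is to reduce from Maximum Independent Set, whose inapproximability within $n^{1-\epsilon}$ (where $n$ is the number of vertices) is known under P $\neq$ NP. Given a graph $H = (V_H, E_H)$ with $|V_H| = k$, build an Ad Types instance with one ad type per vertex of $H$. Each type has a single ad with value 1, the number of slots is $n = k$, and every discount curve is the constant 1 so the role of slot index disappears entirely. The gap matrix is set to $G_{ij} = n+1$ whenever $\{i,j\} \in E_H$ (taking both directions, so the matrix is symmetric on edges), and $G_{ij} = 0$ otherwise. Diagonal entries do not matter since there is only one ad per type.

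The next step is to show that feasible allocations of nonzero total value in this instance are in one-to-one correspondence with independent sets of $H$. On the one hand, any independent set $S \subseteq V_H$ yields a feasible allocation: place one ad per type in $S$ into any $|S|$ of the $n$ slots in an arbitrary order; because no pair in $S$ forms an edge of $H$, every relevant $G_{ij}$ is $0$, so no gap rule is violated. Conversely, consider the set $T$ of types that appear in any feasible allocation. If $\{i,j\} \in E_H$ with both $i,j \in T$, assume WLOG that type $i$ appears in an earlier slot than type $j$; then $G_{ij} = n+1$ forbids $j$ from appearing in any of the remaining at most $n-1$ slots, a contradiction. Hence $T$ is independent. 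Since all discounts and all ad values equal $1$, the social welfare of any feasible allocation equals $|T|$, and the optimum for the Ad Types instance equals $\alpha(H)$, the independence number of $H$.

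The reduction is polynomial, with exactly $k = |V_H|$ ad types. Therefore any polynomial-time $k^{1-\epsilon}$-approximation for the Ad Types Problem with Gap Rules would immediately yield an $|V_H|^{1-\epsilon}$-approximation for Maximum Independent Set, contradicting the hardness result of H{\aa}stad / Zuckerman unless P = NP. This proves the theorem; note that the discount curves are indeed identically $1$, as the statement requires.

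The only subtle point, and the place where care is required, is ensuring that the forbidden-pair constraint is truly symmetric even though gap rules are forward-only: an edge $\{i,j\}$ requires forbidding both orderings, which is handled by symmetrizing $G$ on edges. The remaining arguments are syntactic. One could alternatively avoid relying on the assumption of exactly $n$ ads per type by padding each type with $n-1$ additional value-$0$ ads, which does not change the optimum.
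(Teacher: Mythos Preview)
Your proposal is correct and follows essentially the same approach as the paper: a reduction from Maximum Independent Set with one ad type per vertex, a single value-$1$ ad per type, constant discount curves, and gap entries large enough to forbid any two adjacent types from coexisting in an allocation. The only cosmetic differences are that the paper sets $G_{ij}=k$ rather than $n+1$ (either suffices since $n=k$) and does not spell out the symmetrization or padding remarks, but the argument is otherwise identical.
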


\begin{proof}
The proof reduces the Maximum Independent Set (MIS) problem to the Ad Types problem.
In the MIS problem we have a graph $(V,E)$ and we want to find the largest subset of vertices such that no two vertices are connected with an edge.

Given an instance of the MIS problem, we can build an instance of the Ad Types problem as follows:
\begin{itemize}

    \item Let the number of slots and the number ad types equal the number of vertices in the MIS problem, i.e., $k = n = |V|$.

    \item Each ad type has exactly one ad in it, i.e., for each $\theta_l$ for $l \in \{1, \ldots, k\}$, $|A^{\theta_l}| = 1$.

    \item Each ad has value-per-conversion equal to $1$, i.e. $\forall l \in \{1, \ldots, k\}$, $v^{(\theta_l)} = 1$. Notice that we dropped the subscript over $v$, since there is only one ad per ad type.

    \item The ads have no discount curves over the slots, i.e., $\alpha_1^{\theta_l} = \ldots = \alpha_k^{\theta_l} = 1$ $\forall l \in \{1, \ldots, k\}$.

    \item Let $G$ be the $k \times k$ matrix describing the gap rule. The gap between two ad types $i$ and $j$ is $k$ if $(i,j) \in E$ and $0$ otherwise, i.e., $G_{ij} = k$ if $(i,j) \in E$ and $G_{ij} = 0$ otherwise.

\end{itemize}

Given the construction of the gap rules, two ads can belong to a solution of the Ad Types problem only if their corresponding vertices are not connected in $G$.
Since all the ads have value $1$ for all the slots, a solution of the Ad Types problem represents how many ads have been allocated. Moreover, each allocated ad contributes 1 to the wellfare objective of the Ad Types problem so maximizing the value of the allocation for the Ad Types problem is equivalent to maximizing the size of the correspondign independent set. Thus, the Ad Types problems inherets the $k^{1-\epsilon}$ hardness of approximation of MIS~\cite{H96,Zuckerman07} since $|V|=k$.

\end{proof}

\subsection{Exact Algorithm}

Since the previous section shown that the Ad Types Problem with Gap Rules cannot be solved in polynomial time, unless $P=NP$, here we provide a dynamic-programming based algorithm that solves the problem exactly in exponential time.

The dynamic program has $2k$ dimensions. For each ad type $l$ the dynamic program stores (i) the number of ads of type $\theta_l$ that have been allocated, and (ii) the largest slot index that an ad of type $\theta_l$ has been allocated.

As usual, we assume that within each ad type, the ads monotonically decrease in valuation, i.e., for each $\theta_l,$ $v_1^{\theta_l} \geq v_2^{\theta_l} \geq \ldots \geq v_{n}^{\theta_l}$.

In order to describe our dynamic program, we need some additional notation:

\begin{itemize}
    \item let $n^{\theta_l}$ be the number of ads allocated for the ad type $\theta_l$, we denote these values with a vector $\bm{n^\theta} = \langle n^{\theta_1}, \ldots, n^{\theta_k}\rangle$
    \item let $s^{\theta_l}$ be the last slot where an ad of type $\theta_l$ has been allocated, we denote these values with a vector $\bm{s^\theta} = \langle s^{\theta_1}, \ldots, s^{\theta_k}\rangle$
    \item for a given vector $\bm{a}$, let $\bm{a_{-l,x}}$ to denote a new vector where the $l$th coordinate is replaced by $x$, i.e., if $\bm{a} = \langle a_1, \ldots, a_k\rangle$ then $\bm{a_{-l,x}} = \langle a_1, \ldots, a_{l-1}, x, a_{l+1}, \ldots, a_k\rangle$
\end{itemize}

Let $T[\bm{n^\theta}, \bm{s^\theta}]$ be our dynamic program states, which we define to be maximum value among allocations that assigns $n^{\theta_l}$ ads of type $\theta_l$ with the last ad being assigned to slot $s^{\theta_l}$. We allow $s^\theta_l$ to be empty (which we can denote with the special symbol $\perp$) to handle the case where $n^\theta_l = 0$.

To derive the recurrence for our DP formulation, let us focus on a type $\theta_l$ with largest $s^\theta_l$ value. Given that the last ad of type $\theta_l$ is in $s^\theta_l$, let us denote with $P(\theta_l, \bm{n^\theta}, \bm{s^\theta})$ the set of possible slots where we could have the second to last ad of type $\theta_l$. If $n^\theta_l = 1$ then $P(\theta_l, \bm{n^\theta}, \bm{s^\theta}) = \{ \perp \}$, otherwise, it contains every slot that obeys all the gap rules between this new ad and the ads locations specified by $\bm{s^\theta}$.

\begin{align}
    T[\bm{n^\theta}, \bm{s^\theta}] = \max_{j \in P(\theta_l, \bm{n^\theta}, \bm{s^\theta})}  \Big\{ T[\bm{n^\theta_{-l,n^\theta_l - 1}}, \bm{s^\theta}_{-l, j}] + v_{n^{\theta_l}}^{\theta_l} \cdot \alpha_{s^{\theta_l}}^{\theta_l} \Big\}\text{,
    where } \theta_l = \max_{\theta_l} s^{\theta_l} \label{eq:dp}
\end{align}

The base case of the recurrence is when $\bm{n^\theta}$ is the all 0 vector and $\bm{s^\theta}$ is the all empty-position vector. If we ever reach a state where $P(\theta_l, \bm{n^\theta}, \bm{s^\theta}) = \emptyset$ then the state is infeasible so we set its cost to be $-\infty$. The optimal solution will be at a state with highest value.

\begin{theorem}
    \label{thm:dp}
    The Ad Types Problem can solved optimally in $O(k\cdot n^{2k + 1})$ time.
\end{theorem}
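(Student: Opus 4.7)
The plan is to prove correctness of the recurrence in~(\ref{eq:dp}) and then bound both the number of states and the per-state work. The state space is easy to count: each coordinate $n^{\theta_l}$ lies in $\{0,1,\ldots,n\}$ and each coordinate $s^{\theta_l}$ in $\{\perp,1,\ldots,n\}$, so $(\bm{n^\theta},\bm{s^\theta})$ ranges over $(n+1)^{2k}=O(n^{2k})$ distinct configurations. The optimum is then $\max_{\bm{n^\theta},\bm{s^\theta}} T[\bm{n^\theta},\bm{s^\theta}]$, so everything reduces to showing the recurrence is sound.

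I would argue correctness by induction on $\sum_l n^{\theta_l}$. The base case $\bm{n^\theta}=\bm{0}$ has value $0$. For the inductive step, fix a feasible state $(\bm{n^\theta},\bm{s^\theta})$ and let $\theta_l$ be a type achieving $\max_m s^{\theta_m}$; by the choice of $\theta_l$, the ad of type $\theta_l$ in slot $s^{\theta_l}$ is the rightmost ad of any allocation realising the state. Removing it leaves an allocation whose state is $[\bm{n^\theta_{-l,n^\theta_l-1}},\bm{s^\theta_{-l,j}}]$ for some unique $j$ (with $j=\perp$ when $n^{\theta_l}=1$), and the value drops by $v_{n^{\theta_l}}^{\theta_l}\cdot\alpha_{s^{\theta_l}}^{\theta_l}$. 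Conversely, for each $j\in P(\theta_l,\bm{n^\theta},\bm{s^\theta})$ and each feasible completion of the sub-state, appending the ad at $s^{\theta_l}$ yields a feasible allocation realising the original state. The main conceptual obstacle is that the state compresses an entire allocation into two length-$k$ vectors, so one must show that feasibility can be certified using only $\bm{s^\theta}$. This rests on a monotonicity argument: any unrecorded ad of type $\theta_m$ sits at a slot no larger than $s^{\theta_m}$, so its distance to any new ad placed to its right is at least $s^{\theta_l}-s^{\theta_m}$, and the gap constraint against the summarised last slot of type $\theta_m$ is therefore strictly tighter than the constraint against any earlier ad of the same type. The same argument justifies that $P$ correctly characterises the admissible positions of the second-to-last ad of type $\theta_l$ using only $\bm{s^\theta}$, and that infeasible states (those whose $\bm{s^\theta}$ violates a gap constraint) propagate as $-\infty$ because their $P$ sets are empty.

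For the running time, each state requires $O(k)$ work to identify the pivot $\theta_l$, $O(n)$ candidate values of $j$, and $O(k)$ work per candidate to check that $j$ obeys all $k$ gap-rule constraints recorded in $\bm{s^\theta}$ (the lookup of $T[\bm{n^\theta_{-l,n^\theta_l-1}},\bm{s^\theta_{-l,j}}]$ is $O(1)$ in a table). This is $O(nk)$ per state, and multiplying by $O(n^{2k})$ states gives $O(k\cdot n^{2k+1})$; the same bound accommodates the initial pass that fills the table and, if desired, the standard back-trace that reconstructs an explicit optimal allocation from the DP values. The hardest step is the monotonicity argument for feasibility in the correctness proof — one has to verify carefully that no gap constraint against an ad "hidden" behind a summarised $s^{\theta_m}$ can ever be tighter than the constraint the DP actually checks against $s^{\theta_m}$ itself.
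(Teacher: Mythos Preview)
Your proposal matches the paper's approach exactly: same DP, same $O(n^{2k})$ state count, same $O(nk)$ per-state cost, and the monotonicity observation you spell out (only the last ad of each type can bind the gap constraint for the rightmost placement) is precisely what the paper compresses into a single sentence.

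One small slip worth flagging: it is not true that infeasible states (those whose $\bm{s^\theta}$ already violates a pairwise gap rule) automatically receive $-\infty$ via $P=\emptyset$. For instance, if $n^{\theta_l}=1$ then $P(\theta_l,\bm{n^\theta},\bm{s^\theta})=\{\perp\}$ regardless of whether $s^{\theta_l}$ respects the gap rules against the other $s^{\theta_m}$, so the recurrence would happily assign a finite value to a state that no allocation realises. The fix is trivial---check the $O(k^2)$ pairwise gap constraints among the $s^{\theta_m}$ (and that the pivot slot is not already occupied) when entering a state and set the value to $-\infty$ if they fail---and it is dominated by the $O(nk)$ work you already budget, so the running-time bound is unaffected. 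The paper's proof is equally silent on this point.
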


\begin{proof}
  The correctness of the dynamic programming formulation follows from the fact that for any of the states on the right hand side of \eqref{eq:dp} any allocation that is consistent to the state can be augmented by allocating the $n^{\theta_l}$th ad of type $\theta_l$ to slot $s^{\theta_l}$ due to the fact that $\theta_l$ has the largest $s^{\theta_l}$ value.

  We only consider states where $s^{\theta_l} \leq n$ and $n^{\theta_l} \leq n$ for all types $\theta_l$, so the DP table has $n^{2k}$ entries. Each entry takes $O(kn)$ time to compute (there are $n$ possible slots in $P(\cdot)$ and each one takes $O(k)$ to check whether it obeys the gaps rules), so the total running time is $O(k\cdot n^{2k+1})$.
\end{proof}

\bibliographystyle{plain}
\bibliography{main}

\begin{thebibliography}{10}

\bibitem{AGV07}
Zoe Abrams, Arpita Ghosh, and Erik Vee.
\newblock Cost of conciseness in sponsored search auctions.
\newblock In {\em Proceedings of the 3rd International Conference on Internet
  and Network Economics}, pages 326--334, 2007.

\bibitem{AFMP08}
Gagan Aggarwal, Jon Feldman, S.~Muthukrishnan, and Martin Pal.
\newblock Sponsored search auctions with markovian users.
\newblock In {\em Proceedings of the Workshop on Internet and Network Economics
  (WINE)}, pages 621--628, 2008.

\bibitem{AE11}
Susan Athey and Glenn Ellison.
\newblock Position auctions with consumer search.
\newblock {\em The Quarterly Journal of Economics}, 126(3):1213--1270, 2011.

\bibitem{CKSW17}
Ruggiero Cavallo, Prabhakar Krishnamurthy, Maxim Sviridenko, and Christopher~A
  Wilkens.
\newblock Sponsored search auctions with rich ads.
\newblock In {\em Proceedings of the 26th International Conference on World
  Wide Web}, pages 43--51. International World Wide Web Conferences Steering
  Committee, 2017.

\bibitem{CSW18}
Ruggiero Cavallo, Maxim Sviridenko, and Christopher~A Wilkens.
\newblock Matching auctions for search and native ads.
\newblock In {\em Proceedings of the 2018 ACM Conference on Economics and
  Computation}, pages 663--680. ACM, 2018.

\bibitem{CW14}
Ruggiero Cavallo and Christopher~A. Wilkens.
\newblock Gsp with general independent click-through-rates.
\newblock In Tie-Yan Liu, Qi~Qi, and Yinyu Ye, editors, {\em Web and Internet
  Economics}, pages 400--416, Cham, 2014. Springer International Publishing.

\bibitem{C71}
Edward~H Clarke.
\newblock Multipart pricing of public goods.
\newblock {\em Public choice}, 11(1):17--33, 1971.

\bibitem{DSYZ10}
Xiaotie Deng, Yang Sun, Ming Yin, and Yunhong Zhou.
\newblock Mechanism design for multi-slot ads auction in sponsored search
  markets.
\newblock In {\em Proc of the 4th Frontiers in Algorithmics International
  Workshop}, pages 11--22, 08 2010.

\bibitem{DHW13}
Paul D{\"u}tting, Monika Henzinger, and Ingmar Weber.
\newblock Sponsored search, market equilibria, and the hungarian method.
\newblock {\em Information Processing Letters}, 113(3):67--73, 2013.

\bibitem{EK10}
David Easley, Jon Kleinberg, et~al.
\newblock {\em Networks, crowds, and markets}, volume~8.
\newblock Cambridge university press Cambridge, 2010.

\bibitem{EOS07}
Benjamin Edelman, Michael Ostrovsky, and Michael Schwarz.
\newblock Internet advertising and the generalized second-price auction:
  Selling billions of dollars worth of keywords.
\newblock {\em American economic review}, 97(1):242--259, 2007.

\bibitem{EK72}
Jack Edmonds and Richard~M. Karp.
\newblock Theoretical improvements in algorithmic efficiency for network flow
  problems.
\newblock {\em J. {ACM}}, 19(2):248--264, 1972.

\bibitem{FG16}
Gabriele Farina and Nicola Gatti.
\newblock Ad auctions and cascade model: Gsp inefficiency and algorithms.
\newblock In {\em Proceedings of the Thirtieth AAAI Conference on Artificial
  Intelligence}, AAAI'16, pages 489--495. AAAI Press, 2016.

\bibitem{FT87}
Michael~L Fredman and Robert~Endre Tarjan.
\newblock Fibonacci heaps and their uses in improved network optimization
  algorithms.
\newblock {\em Journal of the ACM (JACM)}, 34(3):596--615, 1987.

\bibitem{GT85}
Harold~N Gabow and Robert~Endre Tarjan.
\newblock A linear-time algorithm for a special case of disjoint set union.
\newblock {\em Journal of computer and system sciences}, 30(2):209--221, 1985.

\bibitem{GM08}
Arpita Ghosh and Mohammad Mahdian.
\newblock Externalities in online advertising.
\newblock In {\em Proceedings of the 17th International World Wide Web
  Conference}, pages 161--168, 2008.

\bibitem{GS10}
Arpita Ghosh and Amin Sayedi.
\newblock Expressive auctions for externalities in online advertising.
\newblock In {\em Proceedings of the 19th International Conference on World
  Wide Web}, pages 371--380, 2010.

\bibitem{GK08}
Ioannis Giotis and Anna~R. Karlin.
\newblock On the equilibria and efficiency of the gsp mechanism in keyword
  auctions with externalities.
\newblock In {\em Proceedings of the 4th International Workshop on Internet and
  Network Economics (WINE)}, pages 629--638, 2008.

\bibitem{G67}
Fred Glover.
\newblock Maximum matching in a convex bipartite graph.
\newblock {\em Naval research logistics quarterly}, 14(3):313--316, 1967.

\bibitem{GIM09}
Renato Gomes, Nicole Immorlica, and Evangelos Markakis.
\newblock Externalities in keyword auctions: An empirical and theoretical
  assessment.
\newblock In {\em Proceedings of the Workshop on Internet and Network Economics
  (WINE)}, pages 172--183, 2009.

\bibitem{G73}
Theodore Groves.
\newblock Incentives in teams.
\newblock {\em Econometrica: Journal of the Econometric Society}, pages
  617--631, 1973.

\bibitem{HIKLN18}
Jason Hartline, Nicole Immorlica, Mohammad~Reza Khani, Brendan Lucier, and Rad
  Niazadeh.
\newblock Fast core pricing for rich advertising auctions.
\newblock In {\em Proceedings of the 2018 ACM Conference on Economics and
  Computation}, pages 111--112. ACM, 2018.

\bibitem{H16}
Patrick Hummel.
\newblock Position auctions with dynamic resizing.
\newblock {\em International Journal of Industrial Organization}, 45:38--46,
  2016.

\bibitem{H96}
Johan Håstad.
\newblock Clique is hard to approximate within $n^(1-\epsilon)$.
\newblock In {\em Acta Mathematica}, pages 627--636, 1996.

\bibitem{K08}
Irit Katriel.
\newblock Matchings in node-weighted convex bipartite graphs.
\newblock {\em INFORMS Journal on Computing}, 20(2):205--211, 2008.

\bibitem{KM08}
David Kempe and Mohammad Mahdian.
\newblock A cascade model for externalities in sponsored search.
\newblock In {\em Proceedings of the 4th International Workshop on Internet and
  Network Economics}, pages 585--596, 2008.

\bibitem{KMU16}
Walter Kern, Bodo Manthey, and Marc Uetz.
\newblock Note on vcg vs. price raising for matching markets.
\newblock {\em arXiv preprint arXiv:1604.04157}, 2016.

\bibitem{K55}
Harold~W Kuhn.
\newblock The hungarian method for the assignment problem.
\newblock {\em Naval research logistics quarterly}, 2(1-2):83--97, 1955.

\bibitem{LPSV07}
S{\'e}bastien Lahaie, David~M Pennock, Amin Saberi, and Rakesh~V Vohra.
\newblock Sponsored search auctions.
\newblock {\em Algorithmic game theory}, pages 699--716, 2007.

\bibitem{LP81}
Witold Lipski and Franco~P Preparata.
\newblock Efficient algorithms for finding maximum matchings in convex
  bipartite graphs and related problems.
\newblock {\em Acta Informatica}, 15(4):329--346, 1981.

\bibitem{MSV15}
Mohammad Mahdian, Okke Schrijvers, and Sergei Vassilvitskii.
\newblock Algorithmic cartography: Placing points of interest and ads on maps.
\newblock In {\em Proceedings of the 21th ACM SIGKDD International Conference
  on Knowledge Discovery and Data Mining}, pages 755--764. ACM, 2015.

\bibitem{M57}
James Munkres.
\newblock Algorithms for the assignment and transportation problems.
\newblock {\em Journal of the society for industrial and applied mathematics},
  5(1):32--38, 1957.

\bibitem{MN09}
S.~Muthukrishnan.
\newblock Bidding on configurations in internet ad auctions.
\newblock In Hung~Q. Ngo, editor, {\em Computing and Combinatorics}, pages
  1--6, Berlin, Heidelberg, 2009. Springer Berlin Heidelberg.

\bibitem{M81}
Roger~B Myerson.
\newblock Optimal auction design.
\newblock {\em Mathematics of operations research}, 6(1):58--73, 1981.

\bibitem{P08}
C~Greg Plaxton.
\newblock Fast scheduling of weighted unit jobs with release times and
  deadlines.
\newblock In {\em International Colloquium on Automata, Languages, and
  Programming}, pages 222--233. Springer, 2008.

\bibitem{P13}
C~Gregory Plaxton.
\newblock Vertex-weighted matching in two-directional orthogonal ray graphs.
\newblock In {\em International Symposium on Algorithms and Computation}, pages
  524--534. Springer, 2013.

\bibitem{RT12}
Lyle Ramshaw and Robert~E Tarjan.
\newblock On minimum-cost assignments in unbalanced bipartite graphs.
\newblock {\em HP Labs, Palo Alto, CA, USA, Tech. Rep. HPL-2012-40R1}, 2012.

\bibitem{R16}
Tim Roughgarden.
\newblock Revenue-maximizing auctions.
\newblock {\em Twenty lectures on algorithmic game theory}, 2016.

\bibitem{RS16}
Tim Roughgarden and Okke Schrijvers.
\newblock Ironing in the dark.
\newblock In {\em Proceedings of the 2016 ACM Conference on Economics and
  Computation}, pages 1--18. ACM, 2016.

\bibitem{SA12}
R~Sharathkumar and Pankaj~K Agarwal.
\newblock Algorithms for the transportation problem in geometric settings.
\newblock In {\em Proceedings of the twenty-third annual ACM-SIAM symposium on
  Discrete Algorithms}, pages 306--317. Society for Industrial and Applied
  Mathematics, 2012.

\bibitem{V07}
Hal~R Varian.
\newblock Position auctions.
\newblock {\em international Journal of industrial Organization},
  25(6):1163--1178, 2007.

\bibitem{V61}
William Vickrey.
\newblock Counterspeculation, auctions, and competitive sealed tenders.
\newblock {\em The Journal of finance}, 16(1):8--37, 1961.

\bibitem{Zuckerman07}
David Zuckerman.
\newblock Linear degree extractors and the inapproximability of max clique and
  chromatic number.
\newblock {\em Theory of Computing}, 3(1):103--128, 2007.

\end{thebibliography}


\end{document}